\documentclass[10pt, twocolumn]{IEEEtran}

\usepackage{psfrag}
\usepackage{amssymb}
\usepackage{amsmath}
\usepackage{pifont}
\usepackage{cite}
\usepackage{graphics}
\usepackage{graphicx} 
\usepackage{epsfig}
\usepackage{subfigure}
\usepackage{tablefootnote}
\usepackage{url}
\usepackage{amscd}
\usepackage{threeparttable}
\usepackage{enumerate}
\usepackage[colorlinks, citecolor=blue,linkcolor=blue]{hyperref}

\newtheorem{theorem}{Theorem}[section]
\newtheorem{remark}{Remark}
\newtheorem{lemma}[theorem]{Lemma}

\newtheorem{proposition}[theorem]{Proposition}
\newtheorem{corollary}[theorem]{Corollary}

\newenvironment{proof}[1][Proof]{\begin{trivlist}
\item[\hskip \labelsep {\bfseries #1}]}{\end{trivlist}}
\newcommand{\qed}{\nobreak \ifvmode \relax \else
      \ifdim\lastskip<1.5em \hskip-\lastskip
      \hskip1.5em plus0em minus0.5em \fi \nobreak
      \vrule height0.75em width0.5em depth0.25em\fi}




\begin{document}

\title{Support Recovery with Orthogonal Matching Pursuit in the Presence of Noise: A New Analysis}

\author{\IEEEauthorblockN{Jian Wang} \\
\IEEEauthorblockA{Department of Electrical \& Computer Engineering \\Duke University\\
Durham, North Carolina, 27708, USA \\
E-mail: jian.wang@duke.edu}}

\maketitle

\begin{abstract}
Support recovery of sparse signals from compressed linear measurements is a fundamental problem in compressed sensing (CS). In this paper, we study the orthogonal matching pursuit (OMP) algorithm for the recovery of support under noise. We consider two signal-to-noise ratio (SNR) settings: i) the SNR depends on the sparsity level $K$ of input signals, and ii) the SNR is an absolute constant independent of $K$. For the first setting, we establish necessary and sufficient conditions for the exact support recovery with OMP, expressed as lower bounds on the SNR. Our results indicate that in order to ensure the exact support recovery of all $K$-sparse signals with the OMP algorithm, the SNR must at least scale linearly with the sparsity level $K$.
In the second setting, since the necessary condition on the SNR is not fulfilled, the exact support recovery with OMP is impossible. However, our analysis shows that recovery with an arbitrarily small but constant fraction of errors is possible  with the OMP algorithm. This result may be useful for some practical applications where obtaining some large fraction of support positions is adequate.
\end{abstract}

\begin{keywords}
Compressed sensing (CS), orthogonal matching pursuit (OMP), restricted isometry property (RIP), signal-to-noise ratio (SNR), minimum-to-average ratio (MAR).
\end{keywords}

{\IEEEpeerreviewmaketitle}

\maketitle
%
%

%
%

\section{Introduction}
We consider the support recovery of a high-dimensional sparse signal from a small number of linear measurements. This is a fundamental problem in compressed sensing (CS)~\cite{donoho2006compressed,candes2006robust} and has also received much attention in the fields of sparse approximation~\cite{devore1993constructive}, signal denoising~\cite{chen1999atomic}, and statistical model selection~\cite{schaefer1992subset}. Let $\mathbf{x} \in \mathcal{R}^n$ be a $K$-sparse signal (i.e., $\|\mathbf{x}\|_0 \leq K \ll n$) and $\mathbf{\Phi} \in \mathcal{R}^{m \times n}$ ($m < n$) be the measurement matrix. The measurements are given by
\begin{equation}
  \mathbf{y} = \mathbf{\Phi x} + \mathbf{v}
\end{equation}
where $\mathbf{v}$ is the corrupting noise. The goal of support recovery is to identify the support (i.e., the positions of nonzero elements) of the input signal $\mathbf{x}$ from the measurement vector $\mathbf{y}$. It is known that optimal support recovery requires an exhaustive search over all possible support sets of the sparse signal and hence is NP-hard~\cite{natarajan1995sparse}. For this reason, much attention has been drawn to computationally efficient approaches. In this paper we consider the orthogonal matching pursuit (OMP) algorithm for solving the support recovery problem.  OMP is a canonical greedy algorithm for sparse approximation in signal processing~\cite{pati1993orthogonal,mallat1993matching}. It is also known as greedy least square regression in statistics~\cite{schaefer1992subset} and forward greedy selection in machine learning~\cite{zhang2009consistency}. The principle of the OMP algorithm is quite simple: it iteratively identifies the support of the sparse signal, by adding one index into the list at a time according to the maximum correlation between columns of measurement matrix and the current residual. There are several popular stopping rules for the OMP algorithm that can be implemented at minimal cost~\cite{tropp2010computational}:
\begin{enumerate}[i)]

\item Stop after a fixed number of iterations: $k = K$.

\item Stop when the energy in the residual is small: $\|\mathbf{r}^k \|_2\leq \epsilon$. 

\item Stop when no column in the measurement matrix is strongly correlated with the  residual: $\|\mathbf{\Phi}' \mathbf{r}^{k - 1} \|_\infty \leq \epsilon$.

\end{enumerate}
See Table \ref{tab:omp} for the mathematical description of a version of OMP. Both in theory and in practice, the OMP algorithm has demonstrated competitive performance~\cite{tropp2007signal}.

Over the years, many efforts have been made to analyze the performance of OMP in sparse support recovery. In one line of work, probabilistic analyses have been proposed. Tropp and Gilbert showed that when the measurement matrix $\mathbf{\Phi}$ is generated iid at random, OMP can ensure the accurate recovery of every fixed $K$-sparse signal from noise-free measurements with overwhelming probability with~\cite{tropp2007signal} \begin{equation}
m \geq cK \log n
\end{equation} for some constant $c$.
Fletcher and Rangan provided an improved scaling law on the number of measurements and also showed that the scaling law works for noisy scenarios for which the signal-to-noise (SNR) goes to infinity~\cite{fletcher2012orthogonal}.

Another direction is to develop deterministic conditions for the exact support recovery with the OMP algorithm~\cite{tropp2004greed,davenport2010analysis,liu2012orthogonal,huang2011recovery,zhang2011sparse,livshits2012efficiency,wang2012Recovery,mo2012remarks,soussen2013joint}. Those conditions are often characterized by the properties of measurement matrices, such as the mutual incoherence property (MIP)~\cite{donoho2001uncertainty} and the restricted isometry property (RIP)~\cite{candes2005decoding}. A measurement matrix $\mathbf{\Phi}$ is said to satisfy
the RIP of order $K$ if there exists a constant $c \in [0, 1)$
such that
\begin{equation}
  \label{eq:RIP} (1- c) \| \mathbf{x} \|_2^2 \leq \| \mathbf{\Phi x} \|_2^2 \leq (1 + c) \| \mathbf{x} \|_2^2
\end{equation}
for all $K$-sparse vectors $\mathbf{x}$. In particular, the minimum value among all constants $c$ satisfying (\ref{eq:RIP}) is called the isometry constant $\delta_{K}$. In the noise-free case (i.e., when the noise vector $\mathbf{v} = \mathbf{0}$), Davenport and Wakin showed that~\cite{davenport2010analysis}
\begin{equation}
\delta_{K + 1} < \frac{1}{3\sqrt K}
\end{equation} is sufficient for OMP to accurately recover the support of the input signal. For further improvements on this condition, see~\cite{liu2012orthogonal,huang2011recovery,zhang2011sparse,livshits2012efficiency,wang2012Recovery,mo2012remarks,wen2013improved,chang2014improved}. The deterministic conditions on the exact support recovery with OMP in the noisy case have been studied in~\cite{zhang2009consistency,cai2011orthogonal,shen2011sparse,wu2013exact,chang2014improved}, in which the researchers considered the OMP algorithm with residual-based stopping rules and established sufficient conditions for the exact support recovery that depend on the properties of measurement matrices and the minimum magnitude of the nonzero elements of the signal.

\setlength{\arrayrulewidth}{1.5pt}
\begin{table} 
  \centering
      \caption[]{%
      The OMP Algorithm \label{tab:omp}}
      \vspace{-1mm}
\begin{tabular}{@{}ll}
\hline \\ \vspace{-12pt} \\
\textbf{Input}       &$\mathbf{\Phi}$, $\mathbf{y}$, and sparsity level $K$. \\
\textbf{Initialize}  & iteration counter $k = 0$, \\
                     & estimated support $\mathcal{T}^{0} = \emptyset$, \\
                     & and residual vector $\mathbf{r}^{0} = \mathbf{y}$.
                     \\
\textbf{While}       & $k < K$ \textbf{do}\\
                     & $k = k + 1$. \\
                     & Identify\tablefootnote{There is another popular version of OMP in the literature which normalizes columns of $\mathbf{\Phi}$ before the identification step (see, e.g.,~\cite{fletcher2012orthogonal}). In the present paper we exclusively consider the version of OMP without column normalization, since results obtained for this version can be readily extended to the version with normalization via rescaling. Note that when $\mathbf{\Phi}$ has normalized columns (e.g., dictionary atoms), the two versions of OMP become identical.} \hspace{0.439mm}${t}^{k}  = \underset{i \in \Omega \backslash \mathcal{T}^{k - 1}}{\arg \max} |\langle \phi_i, \mathbf{r}^{k - 1} \rangle|$. \\
                     & Enlarge \hspace{0.58mm}$\mathcal{T}^{k} = \mathcal{T}^{k - 1} \cup t^{k}$. \\
                     & Estimate $\mathbf{x}^{k} = \underset{\mathbf{u}:\textit{supp}(\mathbf{u}) = \mathcal{T}^k}{\arg \min} \|\mathbf{y}-\mathbf{\Phi} \mathbf{u}\|_2$. \\
                     & Update \hspace{2.1mm}$\mathbf{r}^{k} = \mathbf{y} - \mathbf{\Phi} \mathbf{x}^{k}$. \\
\textbf{End}         \\
\textbf{Output}      & the estimated support $\mathcal{T}^{K}$ and vector $\mathbf{x}^K$.
\vspace{4pt} \\
\hline
\end{tabular}  
\end{table}
\setlength{\arrayrulewidth}{1.3pt}

The main purpose of this paper is to investigate deterministic conditions of OMP for the support recovery in the noisy case. Unlike previous studies that considered residual-based stopping rules for the OMP algorithm, we simply consider that OMP runs $K$ iterations before stopping, which is arguably the most natural stopping rule if one is concerned with the recovery of exact support. We establish necessary and sufficient conditions for the exact support recovery with OMP, expressed as lower bounds on the SNR. Our results indicate that the OMP algorithm can accurately recover the support of all $K$-sparse signals only when the SNR scales at least linearly with the sparsity $K$. For high-dimensional setting, this essentially requires the SNR to be unbounded from above.

We also study the situation where the SNR is upper bounded so that the necessary condition for the exact support recovery with OMP is not fulfilled. The analysis of OMP with bounded SNR has been an interesting open problem~\cite{fletcher2012orthogonal}.
We consider a practical case where the SNR is an absolute constant independent of the sparsity $K$. Our result shows that under appropriate conditions on the SNR and the isometry constant, OMP can approximately recover the support of sparse signals with only a small constant fraction of errors.

The main contributions of this paper are summarized as follows.
\begin{enumerate}[i)]
\item
We consider the exact support recovery with OMP in the noisy scenario. Our analysis shows that OMP can accurately recover the support of any $K$-sparse signal if
  \begin{equation}
\sqrt{\text{SNR}} > \frac{2 \sqrt{K} (1 + \delta_{K + 1})}{(1 - (\sqrt{K} + 1) \delta_{K + 1}) \cdot \sqrt{\text{MAR}}}, \label{eq:jjjjffffa00}
  \end{equation}
where MAR is the minimum-to-average ratio of the input signal (see definitions of the MAR and the SNR in Section~\ref{sec:III}). We also establish a necessary condition for the exact support recovery with OMP as
    \begin{equation}
\sqrt{\text{SNR}} > \frac{\sqrt{K} (1 + \delta_{K + 1})  }{( 1 - \sqrt{K} \delta_{K + 1}) \cdot \sqrt{\text{MAR}}}. \label{eq:jjjjffffa01}
  \end{equation}
Therefore, to ensure the perfect support recovery of all $K$-sparse signals with OMP, the SNR must at least scale linearly with the sparsity level $K$ of input signals. 
\vspace{2mm}

\item
We also consider the case where $\text{SNR}$ is an absolute constant independent of the signal sparsity $K$. Our analysis shows that if
\begin{equation}
 \sqrt{\text{SNR}} \geq \frac{\kappa}{\delta_{2K}^{3/4}}
 \end{equation} where $\kappa := \max_{i,j \in \textit{supp}(\mathbf{x})} \frac{|x_{i}|}{|x_{j}|}$, then OMP can recover the support of any $K$-sparse signal with error rate
\begin{equation}
\rho_{\text{error}} \leq C \kappa^2 \delta_{2K}^{1/2},
\end{equation}
where $C$ is a constant. Therefore, with properly chosen isometry constants, the fraction of errors in the support recovery can be made arbitrarily small.
\end{enumerate}
\vspace{2mm}

The rest of this paper is organized as follows: In Section~\ref{sec:II}, we introduce notations and lemmas that are used in this paper. In Section~\ref{sec:III}, we analyze necessary and sufficient conditions for the exact support recovery with OMP. In Section~\ref{sec:VI}, we provide the results of OMP for the approximate support recovery of sparse signals. Concluding remarks are given in Section~\ref{sec:V}.

%
%

\section{Preliminaries} \label{sec:II}

\subsection{Notations}
We briefly summarize notations used in this paper. Let $\Omega
= \{1,2, \cdots ,n\}$ and $\mathcal{T}= \textit{supp}(\mathbf{x}) =
\{i|i \in \Omega,x_{i} \neq 0\}$ denote the support of vector
$\mathbf{x}$. For $\mathcal{S} \subseteq \Omega$,
$|\mathcal{S}|$ is the cardinality of $\mathcal{S}$. $\mathcal{T}
\setminus \mathcal{S}$ is the set of all elements contained in
$\mathcal{T}$ but not in $\mathcal{S}$. $\mathbf{x}_{\mathcal{S}}
\in \mathcal{R}^{|\mathcal{S}|}$ represents a restriction of the vector
$\mathbf{x}$ to the elements with indices in $\mathcal{S}$.
$\mathbf{\Phi}_{\mathcal{S}} \in \mathcal{R}^{m \times | \mathcal{S}
|}$ is a submatrix of $\mathbf{\Phi}$ that only contains columns
indexed by $\mathcal{S}$. If $\mathbf{\Phi}_{\mathcal{S}}$ is full
column rank, then $\mathbf{\Phi}_{\mathcal{S}}^{\dagger} = (
\mathbf{\Phi}'_{\mathcal{S}} \mathbf{\Phi}_{\mathcal{S}} )^{-1}
\mathbf{\Phi}'_{\mathcal{S}}$ is the pseudoinverse of
$\mathbf{\Phi}_{\mathcal{S}}$. $\text{span} (
\mathbf{\Phi}_{\mathcal{S}} )$ represents the span of columns in
$\mathbf{\Phi}_{\mathcal{S}}$. $\mathbf{P}_{\mathcal{S}} =
\mathbf{\Phi}_{\mathcal{S}} \mathbf{\Phi}_{\mathcal{S}}^{\dagger}$
stands for the projection onto $\text{span} (
\mathbf{\Phi}_{\mathcal{S}} )$. $\mathbf{P}_{\mathcal{S}}^{\bot} =
\mathbf{I} - \mathbf{P}_{\mathcal{S}}$ is the projection onto the
orthogonal complement of $\text{span} (
\mathbf{\Phi}_{\mathcal{S}} )$, where $\mathbf{I}$ denotes the
identity matrix.

\subsection{Lemmas}
The following lemmas are useful for our analysis.
\vspace{2mm}

\begin{lemma}
  [Lemma 3 in {\cite{candes2005decoding}}]\label{lem:mono}If a measurement
  matrix satisfies the RIP of both orders $K_{1}$ and $K_{2}$ where $K_{1} \leq K_{2}$, then
  $$\delta_{K_{1}} \leq \delta_{K_{2}}.$$ This
  property is often referred to as the monotonicity of the isometry constant.
\end{lemma}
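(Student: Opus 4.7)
The plan is to prove this monotonicity property directly from the definition of the isometry constant, exploiting the simple set-theoretic observation that every $K_1$-sparse vector is also $K_2$-sparse whenever $K_1 \leq K_2$. There is no deep computation here; the whole argument is a minimality/containment argument about the set of constants that validate the RIP inequality (\ref{eq:RIP}).

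First, I would recall the characterization of $\delta_K$ as the infimum of all constants $c \in [0,1)$ such that $(1-c)\|\mathbf{x}\|_2^2 \leq \|\mathbf{\Phi}\mathbf{x}\|_2^2 \leq (1+c)\|\mathbf{x}\|_2^2$ holds uniformly over all $K$-sparse vectors $\mathbf{x}$. Next, I would observe that the collection of $K_1$-sparse vectors in $\mathcal{R}^n$ is contained in the collection of $K_2$-sparse vectors, since the constraint $\|\mathbf{x}\|_0 \leq K_1$ is strictly stronger than $\|\mathbf{x}\|_0 \leq K_2$ when $K_1 \leq K_2$.

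From this containment the conclusion is immediate: the constant $\delta_{K_2}$ by definition makes the RIP inequality hold for every $K_2$-sparse vector and hence, a fortiori, for every $K_1$-sparse vector. Therefore $\delta_{K_2}$ lies in the set of admissible constants for order $K_1$, and since $\delta_{K_1}$ is the minimum element of that set, we conclude $\delta_{K_1} \leq \delta_{K_2}$.

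The only subtlety worth pointing out, rather than an actual obstacle, is the need to note that the infimum in the definition of $\delta_K$ is attained (so one can legitimately speak of "the minimum value" as in the paper's definition); this follows by a standard compactness/continuity argument, since the RIP inequalities are closed conditions on $c$. Once this is in place, the proof reduces to a one-line remark on nested sets of admissible constants.
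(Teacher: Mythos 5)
Your proof is correct: the containment of the set of $K_1$-sparse vectors in the set of $K_2$-sparse vectors immediately shows that $\delta_{K_2}$ is an admissible constant for order $K_1$, whence $\delta_{K_1} \leq \delta_{K_2}$. The paper does not supply its own proof of this lemma (it is quoted from Cand\`es and Tao), but your argument is the standard one and matches what that reference does, including the minor point that the infimum of admissible constants is attained so that $\delta_K$ is well defined as a minimum.
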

\vspace{2mm}

\begin{lemma}
  [Direct consequences of RIP {\cite{needell2009cosamp,kwon2013multipath}}]\label{lem:rips} Let $\mathcal{S}
  \subseteq \Omega$. If $\delta_{| \mathcal{S} |} <1$ then for any $\mathbf{u} \in
  \mathcal{R}^{| \mathcal{S} |}$,
  \begin{eqnarray}
    ( 1- \delta_{| \mathcal{S} |} ) \left\| \mathbf{u} \right\|_{2} \leq \left\|
    \mathbf{\Phi}_{\mathcal{S}}'  \mathbf{\Phi}_{\mathcal{S}} \mathbf{u} \right\|_{2} \leq ( 1+
    \delta_{| \mathcal{S} |} ) \left\| \mathbf{u} \right\|_{2}, \nonumber\\
    \frac{1}{1+ \delta_{| \mathcal{S} |}} \left\| \mathbf{u} \right\|_{2} \leq \| (
    \mathbf{\Phi}_{\mathcal{S}}'  \mathbf{\Phi}_{\mathcal{S}} )^{-1} \mathbf{u} \|_{2} \leq
    \frac{1}{1- \delta_{| \mathcal{S} |}} \left\| \mathbf{u} \right\|_{2}.
    \nonumber 
  \end{eqnarray}
\end{lemma}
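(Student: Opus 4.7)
The plan is to reduce both statements to bounds on the spectrum of the Gram matrix $\mathbf{\Phi}_{\mathcal{S}}' \mathbf{\Phi}_{\mathcal{S}}$ and then apply elementary linear-algebra inequalities for symmetric positive-definite operators. The only real input I need is the RIP itself, applied to vectors supported on $\mathcal{S}$.

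First, I would observe that any $\mathbf{u} \in \mathcal{R}^{|\mathcal{S}|}$ corresponds to an $|\mathcal{S}|$-sparse vector in $\mathcal{R}^n$ (by placing $\mathbf{u}$ in the coordinates indexed by $\mathcal{S}$ and zeros elsewhere), so the RIP of order $|\mathcal{S}|$ gives
\begin{equation*}
(1-\delta_{|\mathcal{S}|})\|\mathbf{u}\|_2^2 \;\le\; \mathbf{u}' \mathbf{\Phi}_{\mathcal{S}}' \mathbf{\Phi}_{\mathcal{S}} \mathbf{u} \;\le\; (1+\delta_{|\mathcal{S}|})\|\mathbf{u}\|_2^2 .
\end{equation*}
Because $\mathbf{\Phi}_{\mathcal{S}}' \mathbf{\Phi}_{\mathcal{S}}$ is symmetric, the Rayleigh quotient characterization identifies its eigenvalues as lying in the interval $[1-\delta_{|\mathcal{S}|},\,1+\delta_{|\mathcal{S}|}]$. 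In particular, since $\delta_{|\mathcal{S}|}<1$ the matrix is strictly positive definite and hence invertible, so the pseudoinverse used later is well-defined.

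Next I would invoke the standard fact that for a symmetric positive-definite matrix $\mathbf{A}$ with eigenvalues in $[\alpha,\beta]$, the spectral theorem gives $\alpha \|\mathbf{u}\|_2 \le \|\mathbf{A}\mathbf{u}\|_2 \le \beta \|\mathbf{u}\|_2$ for all $\mathbf{u}$ (by diagonalizing $\mathbf{A}=\mathbf{U}\mathbf{\Lambda}\mathbf{U}'$ and noting that $\|\mathbf{A}\mathbf{u}\|_2^2 = \sum_i \lambda_i^2 (\mathbf{U}'\mathbf{u})_i^2$ is sandwiched between $\alpha^2\|\mathbf{u}\|_2^2$ and $\beta^2\|\mathbf{u}\|_2^2$). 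Applying this to $\mathbf{A}=\mathbf{\Phi}_{\mathcal{S}}'\mathbf{\Phi}_{\mathcal{S}}$ with $\alpha=1-\delta_{|\mathcal{S}|}$ and $\beta=1+\delta_{|\mathcal{S}|}$ yields the first displayed inequality.

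For the second displayed inequality, I would use the fact that the eigenvalues of $(\mathbf{\Phi}_{\mathcal{S}}'\mathbf{\Phi}_{\mathcal{S}})^{-1}$ are the reciprocals of the eigenvalues of $\mathbf{\Phi}_{\mathcal{S}}'\mathbf{\Phi}_{\mathcal{S}}$, hence lie in $\bigl[\tfrac{1}{1+\delta_{|\mathcal{S}|}},\,\tfrac{1}{1-\delta_{|\mathcal{S}|}}\bigr]$, and then apply the same spectral sandwich to conclude. There is no real obstacle here; the only thing to be mildly careful about is ensuring that the restriction of RIP to coordinates in $\mathcal{S}$ is justified, which follows immediately from the definition of $\delta_{|\mathcal{S}|}$ together with the zero-padding embedding $\mathcal{R}^{|\mathcal{S}|} \hookrightarrow \mathcal{R}^n$.
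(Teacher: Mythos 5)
Your proof is correct and is the standard spectral argument: the RIP bounds the Rayleigh quotient of the Gram matrix, hence its eigenvalues lie in $[1-\delta_{|\mathcal{S}|},\,1+\delta_{|\mathcal{S}|}]$, and the two displayed inequalities follow by diagonalization of $\mathbf{\Phi}_{\mathcal{S}}'\mathbf{\Phi}_{\mathcal{S}}$ and of its inverse. The paper does not prove this lemma itself but cites it from the literature, and your argument is essentially the same one used in those references, so there is nothing to add.
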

\vspace{2mm}

\begin{lemma}
  [Consequences of RIP \cite{candes2008restricted,shen2014analysis}] \label{lem:correlationrip} Let
  $\mathcal{S}_{1},\mathcal{S}_{2},\mathcal{S}_{3} \subseteq \Omega$ and $\mathcal{S}_{1} \cap \mathcal{S}_{2} \cap \mathcal{S}_{2} =
  \emptyset$. If $\delta_{|\mathcal{S}_{1} | + |\mathcal{S}_{2} | + |\mathcal{S}_{3} |} <1$, then for any vector $\mathbf{v} \in \mathcal{R}^{|\mathcal{S}_{2}|}$,
\begin{eqnarray}
  \| \mathbf{\Phi}_{\mathcal{S}_{1}}'  \mathbf{\Phi}_{\mathcal{S}_2} \mathbf{v} \|_{2} &\leq& \delta_{|\mathcal{S}_{1} | +
     |\mathcal{S}_{2} |} \left\| \mathbf{v} \right\|_{2}, \nonumber \\
    \| \mathbf{\Phi}_{\mathcal{S}_{1}}' \mathbf{P}^\bot_{\mathcal{S}_3}  \mathbf{\Phi}_{\mathcal{S}_2} \mathbf{v}  \|_{2} &\leq& \delta_{|\mathcal{S}_{1} | + |\mathcal{S}_{2}| + |\mathcal{S}_{3} |} \left\| \mathbf{v} \right\|_{2}. \nonumber
\end{eqnarray}
\end{lemma}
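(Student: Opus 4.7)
The lemma has two inequalities, which I plan to prove by complementary arguments.

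For the first inequality I would use the classical polarization argument. Normalize by assuming $\|\mathbf{v}\|_2 = 1$ and let $\mathbf{u} \in \mathcal{R}^{|\mathcal{S}_1|}$ be an arbitrary unit vector. Writing $\tilde{\mathbf{u}}, \tilde{\mathbf{v}} \in \mathcal{R}^n$ for the zero-extensions of $\mathbf{u}$ and $\mathbf{v}$ with supports in $\mathcal{S}_1$ and $\mathcal{S}_2$, the disjointness $\mathcal{S}_1 \cap \mathcal{S}_2 = \emptyset$ yields $\|\tilde{\mathbf{u}} \pm \tilde{\mathbf{v}}\|_2^2 = 2$, and both extensions are $(|\mathcal{S}_1|+|\mathcal{S}_2|)$-sparse. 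RIP then sandwiches $\|\mathbf{\Phi}(\tilde{\mathbf{u}} \pm \tilde{\mathbf{v}})\|_2^2$ inside $[2(1-\delta_{|\mathcal{S}_1|+|\mathcal{S}_2|}),\, 2(1+\delta_{|\mathcal{S}_1|+|\mathcal{S}_2|})]$, and the polarization identity $4\langle \mathbf{\Phi}_{\mathcal{S}_1}\mathbf{u}, \mathbf{\Phi}_{\mathcal{S}_2}\mathbf{v}\rangle = \|\mathbf{\Phi}(\tilde{\mathbf{u}}+\tilde{\mathbf{v}})\|_2^2 - \|\mathbf{\Phi}(\tilde{\mathbf{u}}-\tilde{\mathbf{v}})\|_2^2$ yields $|\mathbf{u}' \mathbf{\Phi}_{\mathcal{S}_1}'\mathbf{\Phi}_{\mathcal{S}_2}\mathbf{v}| \le \delta_{|\mathcal{S}_1|+|\mathcal{S}_2|}$. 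Taking the supremum over $\mathbf{u}$ and rescaling by $\|\mathbf{v}\|_2$ finishes this part.

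For the second inequality, a naive triangle-inequality expansion of $\mathbf{P}^{\bot}_{\mathcal{S}_3} = \mathbf{I} - \mathbf{P}_{\mathcal{S}_3}$ combined with the first inequality loses at least a factor of roughly $\sqrt{(1+\delta)/(1-\delta)}$ and does not recover the sharp constant. Instead, I would view $\mathbf{\Phi}_{\mathcal{S}_1}'\mathbf{P}^{\bot}_{\mathcal{S}_3}\mathbf{\Phi}_{\mathcal{S}_2}$ as the off-diagonal block of the symmetric PSD matrix $\mathbf{S} := \mathbf{\Phi}_{\mathcal{T}}'\mathbf{P}^{\bot}_{\mathcal{S}_3}\mathbf{\Phi}_{\mathcal{T}}$, where $\mathcal{T} := \mathcal{S}_1 \cup \mathcal{S}_2$, and show that its spectrum sits inside $[1-\delta,\, 1+\delta]$ for $\delta := \delta_{|\mathcal{S}_1|+|\mathcal{S}_2|+|\mathcal{S}_3|}$. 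For any unit $\mathbf{z}$, the upper bound $\mathbf{z}'\mathbf{S}\mathbf{z} \le \|\mathbf{\Phi}_{\mathcal{T}}\mathbf{z}\|_2^2 \le 1+\delta_{|\mathcal{T}|} \le 1+\delta$ follows from projection contractivity, RIP, and Lemma~\ref{lem:mono}. For the lower bound, I would pick $\mathbf{q}^{*} := -(\mathbf{\Phi}_{\mathcal{S}_3}'\mathbf{\Phi}_{\mathcal{S}_3})^{-1}\mathbf{\Phi}_{\mathcal{S}_3}'\mathbf{\Phi}_{\mathcal{T}}\mathbf{z}$ (invertibility is ensured by Lemma~\ref{lem:rips}) and assemble the $(|\mathcal{T}|+|\mathcal{S}_3|)$-sparse vector $\mathbf{w}$ with $\mathbf{w}|_{\mathcal{T}} = \mathbf{z}$, $\mathbf{w}|_{\mathcal{S}_3} = \mathbf{q}^{*}$; then $\mathbf{\Phi}\mathbf{w} = \mathbf{P}^{\bot}_{\mathcal{S}_3}\mathbf{\Phi}_{\mathcal{T}}\mathbf{z}$ and RIP gives $\mathbf{z}'\mathbf{S}\mathbf{z} = \|\mathbf{\Phi}\mathbf{w}\|_2^2 \ge (1-\delta)\|\mathbf{w}\|_2^2 \ge 1-\delta$. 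Consequently $\|\mathbf{S} - \mathbf{I}\| \le \delta$ in the spectral norm, and since the spectral norm of any off-diagonal block of a matrix is bounded by the spectral norm of the matrix itself, one reads off $\|\mathbf{\Phi}_{\mathcal{S}_1}'\mathbf{P}^{\bot}_{\mathcal{S}_3}\mathbf{\Phi}_{\mathcal{S}_2}\mathbf{v}\|_2 \le \delta\|\mathbf{v}\|_2$, which is the claim.

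The principal obstacle is spotting the Schur-complement / block-Gram reformulation in the second part: without it, the auxiliary vector $\mathbf{q}^{*}$ cannot be absorbed cleanly and must be bounded separately, introducing the loss noted above. Packaging the projection into $\mathbf{S}$ sidesteps this by letting a single RIP application on each side pin the entire spectrum down tightly. A minor caveat: the displayed hypothesis $\mathcal{S}_1 \cap \mathcal{S}_2 \cap \mathcal{S}_2 = \emptyset$ appears to be a typo and should be read as pairwise disjointness of $\mathcal{S}_1, \mathcal{S}_2, \mathcal{S}_3$, which is what makes the sparsity count $|\mathcal{S}_1|+|\mathcal{S}_2|+|\mathcal{S}_3|$ in the bound correct.
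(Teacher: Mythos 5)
Your proof is correct. The paper does not prove this lemma itself --- it is imported by citation --- and your two arguments are essentially the standard ones from the cited sources: the polarization identity applied to $\mathbf{\Phi}(\tilde{\mathbf{u}}\pm\tilde{\mathbf{v}})$ for the first bound (Lemma~2.1 of the Cand\`es reference), and the block-Gram/least-squares trick of absorbing the projection residual into a single $(|\mathcal{S}_1|+|\mathcal{S}_2|+|\mathcal{S}_3|)$-sparse vector $\mathbf{w}$ so that one RIP application pins the spectrum of $\mathbf{\Phi}_{\mathcal{T}}'\mathbf{P}^{\bot}_{\mathcal{S}_3}\mathbf{\Phi}_{\mathcal{T}}$ inside $[1-\delta,1+\delta]$ for the second. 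You are also right that the hypothesis as printed is a typo for pairwise disjointness, which is what makes the sparsity counts and the off-diagonal-block extraction legitimate.
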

\vspace{2mm}

\begin{lemma}[Proposition 3.1 in~\cite{needell2009cosamp}] \label{lem:rip5}
Let $\mathcal{S} \subseteq \Omega$. If $\delta_{|\mathcal{S} |} < 1$, then for any vector $\mathbf{u} \in \mathcal{R}^m$,
\begin{equation}
\|\mathbf{\Phi}'_\mathcal{S} \mathbf{u}\|_2 \leq \sqrt{1 + \delta_{|\mathcal{S}|}} \|\mathbf{u}\|_2. \nonumber
\end{equation}
\end{lemma}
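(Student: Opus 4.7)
The plan is to prove this by recognizing that the claim is really a statement about the spectral (operator) norm of $\mathbf{\Phi}_\mathcal{S}$: the RIP directly bounds $\|\mathbf{\Phi}_\mathcal{S}\mathbf{v}\|_2^2$ by $(1+\delta_{|\mathcal{S}|})\|\mathbf{v}\|_2^2$, and since $\mathbf{\Phi}_\mathcal{S}$ and its transpose $\mathbf{\Phi}'_\mathcal{S}$ share the same spectral norm, the bound for $\|\mathbf{\Phi}'_\mathcal{S}\mathbf{u}\|_2$ follows immediately. I will phrase this as a short duality argument rather than invoking the identity $\|A\|=\|A'\|$ as a black box, so that the reasoning stays self-contained.

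First, I would use the variational characterization of the Euclidean norm: for any vector $\mathbf{w}\in\mathcal{R}^{|\mathcal{S}|}$,
\begin{equation}
\|\mathbf{w}\|_2 = \sup_{\mathbf{v}\in\mathcal{R}^{|\mathcal{S}|},\,\|\mathbf{v}\|_2=1}\langle \mathbf{v},\mathbf{w}\rangle,
\end{equation}
applied with $\mathbf{w}=\mathbf{\Phi}'_\mathcal{S}\mathbf{u}$. Then I would move $\mathbf{\Phi}_\mathcal{S}$ to the other side of the inner product, giving $\langle \mathbf{v},\mathbf{\Phi}'_\mathcal{S}\mathbf{u}\rangle=\langle \mathbf{\Phi}_\mathcal{S}\mathbf{v},\mathbf{u}\rangle$, and apply Cauchy--Schwarz to obtain the bound $\|\mathbf{\Phi}_\mathcal{S}\mathbf{v}\|_2\|\mathbf{u}\|_2$.

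At this point the RIP enters: since $\mathbf{v}$ has support in $\mathcal{S}$, the zero-padded version of $\mathbf{v}$ is $|\mathcal{S}|$-sparse, so the upper RIP inequality from (\ref{eq:RIP}) (together with the equivalent form of Lemma~\ref{lem:rips}) yields $\|\mathbf{\Phi}_\mathcal{S}\mathbf{v}\|_2\le\sqrt{1+\delta_{|\mathcal{S}|}}\|\mathbf{v}\|_2=\sqrt{1+\delta_{|\mathcal{S}|}}$. Combining the three ingredients and taking the supremum over $\mathbf{v}$ produces the desired inequality.

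There is no real obstacle here: the lemma is essentially a restatement of the upper RIP bound on $\mathbf{\Phi}_\mathcal{S}$ via adjoints, and the only thing to be careful about is that the vector $\mathbf{u}$ lives in $\mathcal{R}^m$ (not in $\mathcal{R}^{|\mathcal{S}|}$), which is why the duality step is needed to swap the roles. Once that is handled, the proof is a one-line consequence of RIP and Cauchy--Schwarz.
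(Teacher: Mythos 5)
Your argument is correct: the paper itself gives no proof of this lemma (it is quoted from Proposition~3.1 of the CoSaMP paper), and your duality computation
$\|\mathbf{\Phi}'_{\mathcal{S}}\mathbf{u}\|_2=\sup_{\|\mathbf{v}\|_2=1}\langle\mathbf{\Phi}_{\mathcal{S}}\mathbf{v},\mathbf{u}\rangle\le\sqrt{1+\delta_{|\mathcal{S}|}}\,\|\mathbf{u}\|_2$
is exactly the standard unpacking of the fact that $\mathbf{\Phi}_{\mathcal{S}}$ and $\mathbf{\Phi}'_{\mathcal{S}}$ have the same spectral norm, which the RIP bounds by $\sqrt{1+\delta_{|\mathcal{S}|}}$. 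No gaps.
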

\vspace{2mm}

\begin{lemma}[Lemma 5 in~\cite{cai2011orthogonal}] \label{lem:rip6} Let
  $\mathcal{S}_{1},\mathcal{S}_{2}, \subseteq \Omega$ be disjoint sets and denote $\mathbf{A} = \mathbf{\Phi}'_{\mathcal{S}_1}  \mathbf{P}^\bot_{\mathcal{S}_2}  \mathbf{\Phi}_{\mathcal{S}_1 }$ and $\mathbf{B} = \mathbf{\Phi}'_{\mathcal{S}_1 \cup \mathcal{S}_2} \mathbf{\Phi}_{\mathcal{S}_1 \cup \mathcal{S}_2}$.  Then the minimum and maximum eigenvalues of $\mathbf{A}$ and $\mathbf{B}$ satisfy
\begin{equation}
    \lambda_{\min} ( \mathbf{A}) \geq \lambda_{\min} (\mathbf{B})~~\text{and}~~ \lambda_{\max} (\mathbf{A}) \leq \lambda_{\max} (\mathbf{B}). \nonumber
  \end{equation}
\end{lemma}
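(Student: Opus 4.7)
My plan is to exhibit $\mathbf{A}$ as the Schur complement of the bottom-right block of $\mathbf{B}$ and then extract both eigenvalue bounds from a single variational identity. Writing $\mathbf{B}$ in the block form with diagonal blocks $\mathbf{\Phi}_{\mathcal{S}_1}'\mathbf{\Phi}_{\mathcal{S}_1}$, $\mathbf{\Phi}_{\mathcal{S}_2}'\mathbf{\Phi}_{\mathcal{S}_2}$ and off-diagonal block $\mathbf{\Phi}_{\mathcal{S}_1}'\mathbf{\Phi}_{\mathcal{S}_2}$, a direct computation gives
\[
\mathbf{\Phi}_{\mathcal{S}_1}'\mathbf{\Phi}_{\mathcal{S}_1} - \mathbf{\Phi}_{\mathcal{S}_1}'\mathbf{\Phi}_{\mathcal{S}_2}(\mathbf{\Phi}_{\mathcal{S}_2}'\mathbf{\Phi}_{\mathcal{S}_2})^{-1}\mathbf{\Phi}_{\mathcal{S}_2}'\mathbf{\Phi}_{\mathcal{S}_1} = \mathbf{\Phi}_{\mathcal{S}_1}'\mathbf{P}_{\mathcal{S}_2}^{\bot}\mathbf{\Phi}_{\mathcal{S}_1} = \mathbf{A}
\]
(and if $\mathbf{\Phi}_{\mathcal{S}_2}$ is rank-deficient one can replace the inverse by a pseudoinverse without changing the identity). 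From this I read off the variational description $\mathbf{u}'\mathbf{A}\mathbf{u} = \|\mathbf{P}_{\mathcal{S}_2}^{\bot}\mathbf{\Phi}_{\mathcal{S}_1}\mathbf{u}\|_2^2 = \min_{\mathbf{w}}\|\mathbf{\Phi}_{\mathcal{S}_1}\mathbf{u} - \mathbf{\Phi}_{\mathcal{S}_2}\mathbf{w}\|_2^2$, which is the lever for both bounds.

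For $\lambda_{\min}(\mathbf{A})$, I take a unit-norm minimizing eigenvector $\mathbf{u}$ and let $\mathbf{w}^{\star}$ attain the minimum. Stacking yields $\mathbf{z} \in \mathcal{R}^{|\mathcal{S}_1|+|\mathcal{S}_2|}$ with entries $\mathbf{u}$ on $\mathcal{S}_1$ and $-\mathbf{w}^{\star}$ on $\mathcal{S}_2$ (well-defined because $\mathcal{S}_1 \cap \mathcal{S}_2 = \emptyset$), so that $\mathbf{\Phi}_{\mathcal{S}_1 \cup \mathcal{S}_2}\mathbf{z} = \mathbf{\Phi}_{\mathcal{S}_1}\mathbf{u} - \mathbf{\Phi}_{\mathcal{S}_2}\mathbf{w}^{\star}$. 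Then
\[
\lambda_{\min}(\mathbf{A}) = \|\mathbf{\Phi}_{\mathcal{S}_1 \cup \mathcal{S}_2}\mathbf{z}\|_2^2 = \mathbf{z}'\mathbf{B}\mathbf{z} \geq \lambda_{\min}(\mathbf{B})\|\mathbf{z}\|_2^2 \geq \lambda_{\min}(\mathbf{B}),
\]
where the final step uses $\|\mathbf{z}\|_2^2 = 1 + \|\mathbf{w}^{\star}\|_2^2 \geq 1$.

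For $\lambda_{\max}(\mathbf{A})$, the same variational identity lets me use the suboptimal choice $\mathbf{w} = \mathbf{0}$ to obtain $\mathbf{u}'\mathbf{A}\mathbf{u} \leq \|\mathbf{\Phi}_{\mathcal{S}_1}\mathbf{u}\|_2^2 = \mathbf{u}'\mathbf{\Phi}_{\mathcal{S}_1}'\mathbf{\Phi}_{\mathcal{S}_1}\mathbf{u}$. Since $\mathbf{\Phi}_{\mathcal{S}_1}'\mathbf{\Phi}_{\mathcal{S}_1}$ is a principal submatrix of the symmetric matrix $\mathbf{B}$, Cauchy's interlacing theorem yields $\lambda_{\max}(\mathbf{\Phi}_{\mathcal{S}_1}'\mathbf{\Phi}_{\mathcal{S}_1}) \leq \lambda_{\max}(\mathbf{B})$, and maximizing the preceding bound over unit $\mathbf{u}$ gives $\lambda_{\max}(\mathbf{A}) \leq \lambda_{\max}(\mathbf{B})$.

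The only real subtlety is the bookkeeping for the minimum-eigenvalue step: one must recognize that the freedom to choose $\mathbf{w}$ in the residual representation is precisely what bridges the $|\mathcal{S}_1|\times|\mathcal{S}_1|$ matrix $\mathbf{A}$ to the larger Gram matrix $\mathbf{B}$, and that the natural extension of $\mathbf{u}$ is by $-\mathbf{w}^{\star}$ rather than by zeros (the latter would give an inequality in the wrong direction). Once this is in place, the bound $\|\mathbf{z}\|_2 \geq 1$ closes the argument without further work.
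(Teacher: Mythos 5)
Your proof is correct: the Schur-complement identity $\mathbf{A} = \mathbf{\Phi}_{\mathcal{S}_1}'\mathbf{\Phi}_{\mathcal{S}_1} - \mathbf{\Phi}_{\mathcal{S}_1}'\mathbf{\Phi}_{\mathcal{S}_2}(\mathbf{\Phi}_{\mathcal{S}_2}'\mathbf{\Phi}_{\mathcal{S}_2})^{-1}\mathbf{\Phi}_{\mathcal{S}_2}'\mathbf{\Phi}_{\mathcal{S}_1}$, the variational representation $\mathbf{u}'\mathbf{A}\mathbf{u} = \min_{\mathbf{w}}\|\mathbf{\Phi}_{\mathcal{S}_1}\mathbf{u} - \mathbf{\Phi}_{\mathcal{S}_2}\mathbf{w}\|_2^2$, and the use of $\|\mathbf{z}\|_2 \geq 1$ together with $\lambda_{\min}(\mathbf{B}) \geq 0$ all check out, as does the interlacing (or zero-extension) step for the maximum eigenvalue. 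The paper itself gives no proof---it imports the lemma from the cited reference---and your argument is essentially the standard one used there, so there is nothing to compare beyond noting agreement.
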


%
%

\section{Exact Support Recovery via OMP in the Presence of Noise} \label{sec:III}

\subsection{Main Results}
In this section, we analyze the condition for the exact support recovery with OMP in the presence of noise. Following the analysis in~\cite{fletcher2012orthogonal}, we parameterize the dependence on the noise $\mathbf{v}$ and the signal $\mathbf{x}$ with two quantities: $$\text{SNR}:= \frac{\|\mathbf{\Phi x}\|_2^2}{\|\mathbf{v}\|_2^2}.$$
and
$$\text{MAR} = \frac{\min_{j \in \mathcal{T}} |x_{j}|^2}{\|\mathbf{x}\|_2^2 /{K}}.$$
The next theorem provides a condition under which OMP can accurately recover the support of any $K$-sparse signal $\mathbf{x}$. 
\vspace{2mm}

\begin{theorem} [Sufficient Condition] \label{thm:2}
Suppose that the measurement matrix $\mathbf{\Phi}$ satisfies the RIP with $\delta_{K + 1} < \frac{1}{\sqrt K + 1}$. Then OMP performs the exact support recovery of any $K$-sparse signal $\mathbf{x}$ from its noisy measurements $\mathbf{y} = \mathbf{\Phi x} + \mathbf{v}$, provided that
  \begin{equation}
\sqrt{\text{SNR}} > \frac{2 \sqrt K (1 + \delta_{K + 1})  }{( 1 - (\sqrt{K} + 1) \delta_{K + 1} )  \cdot \sqrt{\text{MAR}}}.  \label{eq:jjjjffffa33}
  \end{equation}
\end{theorem}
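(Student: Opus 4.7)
The plan is to prove Theorem~\ref{thm:2} by induction on the iteration index, showing that if the first $k-1$ OMP iterations have selected only true support atoms (i.e.\ $\mathcal{T}^{k-1}\subseteq\mathcal{T}$), then under the stated hypotheses the $k$-th selected index $t^{k}$ also lies in $\mathcal{T}$. This follows from OMP's greedy rule once we establish
\[
\max_{i\in\mathcal{T}\setminus\mathcal{T}^{k-1}} |\langle\phi_{i},\mathbf{r}^{k-1}\rangle|
\;>\;
\max_{j\in\Omega\setminus\mathcal{T}} |\langle\phi_{j},\mathbf{r}^{k-1}\rangle|,
\]
and then the theorem follows by induction from the base case $k=1$ (where $\mathbf{r}^{0}=\mathbf{y}$ and $\mathcal{T}^{0}=\emptyset$). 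Throughout, I will use the key identity $\mathbf{r}^{k-1}=\mathbf{P}^{\bot}_{\mathcal{T}^{k-1}}\mathbf{\Phi}_{\mathcal{T}\setminus\mathcal{T}^{k-1}}\mathbf{x}_{\mathcal{T}\setminus\mathcal{T}^{k-1}}+\mathbf{P}^{\bot}_{\mathcal{T}^{k-1}}\mathbf{v}$, which holds because $\mathbf{P}^{\bot}_{\mathcal{T}^{k-1}}$ annihilates the atoms in $\mathcal{T}^{k-1}$.

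Let $s:=|\mathcal{T}\setminus\mathcal{T}^{k-1}|=K-k+1$. For the lower bound on the correct-atom side, I would start from $\max_{i}|\langle\phi_{i},\mathbf{r}^{k-1}\rangle|\ge\|\mathbf{\Phi}_{\mathcal{T}\setminus\mathcal{T}^{k-1}}'\mathbf{r}^{k-1}\|_{2}/\sqrt{s}$, then split the right-hand side into a signal contribution $\mathbf{\Phi}_{\mathcal{T}\setminus\mathcal{T}^{k-1}}'\mathbf{P}^{\bot}_{\mathcal{T}^{k-1}}\mathbf{\Phi}_{\mathcal{T}\setminus\mathcal{T}^{k-1}}\mathbf{x}_{\mathcal{T}\setminus\mathcal{T}^{k-1}}$ and a noise contribution $\mathbf{\Phi}_{\mathcal{T}\setminus\mathcal{T}^{k-1}}'\mathbf{P}^{\bot}_{\mathcal{T}^{k-1}}\mathbf{v}$. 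Lemma~\ref{lem:rip6} (with Lemma~\ref{lem:rips} and monotonicity, Lemma~\ref{lem:mono}) bounds the signal contribution from below by $(1-\delta_{K+1})\|\mathbf{x}_{\mathcal{T}\setminus\mathcal{T}^{k-1}}\|_{2}$, while Lemma~\ref{lem:rip5} applied to $\mathbf{w}:=\mathbf{P}^{\bot}_{\mathcal{T}^{k-1}}\mathbf{v}$ (whose norm is at most $\|\mathbf{v}\|_{2}$) bounds the noise contribution from above by $\sqrt{1+\delta_{K+1}}\|\mathbf{v}\|_{2}$. Subtracting and dividing by $\sqrt{s}$ gives the lower bound on the left-hand maximum.

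For the upper bound on the incorrect-atom side, I would handle each $j\in\Omega\setminus\mathcal{T}$ as a singleton and apply Lemma~\ref{lem:correlationrip} with $\mathcal{S}_{1}=\{j\}$, $\mathcal{S}_{2}=\mathcal{T}\setminus\mathcal{T}^{k-1}$, $\mathcal{S}_{3}=\mathcal{T}^{k-1}$ (these are pairwise disjoint since $j\notin\mathcal{T}$); the total cardinality is $1+s+(k-1)=K+1$, yielding $|\langle\phi_{j},\mathbf{P}^{\bot}_{\mathcal{T}^{k-1}}\mathbf{\Phi}_{\mathcal{T}\setminus\mathcal{T}^{k-1}}\mathbf{x}_{\mathcal{T}\setminus\mathcal{T}^{k-1}}\rangle|\le\delta_{K+1}\|\mathbf{x}_{\mathcal{T}\setminus\mathcal{T}^{k-1}}\|_{2}$. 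The noise contribution per column is bounded by $\|\phi_{j}\|_{2}\|\mathbf{v}\|_{2}\le\sqrt{1+\delta_{K+1}}\|\mathbf{v}\|_{2}$ (RIP of order $1$ plus monotonicity). Combining the two upper bounds and requiring the correct-atom lower bound to exceed the incorrect-atom upper bound yields
\[
\bigl[1-(1+\sqrt{s})\delta_{K+1}\bigr]\|\mathbf{x}_{\mathcal{T}\setminus\mathcal{T}^{k-1}}\|_{2}
> (1+\sqrt{s})\sqrt{1+\delta_{K+1}}\|\mathbf{v}\|_{2}.
\]

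To finish, I would translate this raw inequality into the claimed SNR form. Using the MAR definition, $\|\mathbf{x}_{\mathcal{T}\setminus\mathcal{T}^{k-1}}\|_{2}\ge\sqrt{s\cdot\text{MAR}/K}\,\|\mathbf{x}\|_{2}$; using the RIP upper bound $\|\mathbf{\Phi}\mathbf{x}\|_{2}\le\sqrt{1+\delta_{K+1}}\|\mathbf{x}\|_{2}$ together with the SNR definition, $\|\mathbf{v}\|_{2}\le\sqrt{(1+\delta_{K+1})/\text{SNR}}\,\|\mathbf{x}\|_{2}$. Rearranging gives a condition of the form $\sqrt{\text{SNR}}>\frac{(1+\sqrt{s})\sqrt{K}(1+\delta_{K+1})}{[1-(1+\sqrt{s})\delta_{K+1}]\sqrt{s\cdot\text{MAR}}}$. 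The main (minor) obstacle is picking a single uniform condition valid for every $s\in\{1,\dots,K\}$: the factor $(1+\sqrt{s})\sqrt{K}/\sqrt{s}=\sqrt{K}(1+1/\sqrt{s})$ is maximized at $s=1$ giving $2\sqrt{K}$, while the denominator $1-(1+\sqrt{s})\delta_{K+1}$ is minimized at $s=K$ giving $1-(1+\sqrt{K})\delta_{K+1}$ (positive by the hypothesis $\delta_{K+1}<1/(1+\sqrt{K})$). Combining these two worst cases yields exactly~\eqref{eq:jjjjffffa33} and closes the induction.
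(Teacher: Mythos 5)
Your proposal is correct and follows essentially the same route as the paper: the same induction, the same lower bound via the RMS of $\mathbf{\Phi}'_{\mathcal{T}\setminus\mathcal{T}^{k-1}}\mathbf{r}^{k-1}$ with Lemma~\ref{lem:rip6}, the same per-column upper bound via Lemma~\ref{lem:correlationrip}, and the same final step of taking the worst case $s=1$ for the $1+1/\sqrt{s}$ factor and $s=K$ for the denominator. The only cosmetic difference is that you fold the first iteration into the general induction step (the paper treats $k=1$ separately and obtains a MAR-free condition there), which is harmless since your $s=K$ condition is still implied by \eqref{eq:jjjjffffa33} because $1+\sqrt{K}\leq 2\sqrt{K}$.
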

\vspace{2mm}

One can interpret from Theorem~\ref{thm:2} that in the high-dimensional setting, the exact support recovery with OMP can be ensured in the high SNR region. 
Moreover, observe that \eqref{eq:jjjjffffa33} can be rewritten as
    \begin{equation}
    \delta_{K + 1} <  \frac{1 - t}{\sqrt K + 1 + t}, \label{eq:relaxation122}
    \end{equation}
    where $$t := \frac{2 \sqrt K}{\sqrt{\text{SNR} \cdot \text{MAR}}}.$$
    Hence, when $t \rightarrow 0$, the condition reduces to
\begin{equation}
\delta_{K + 1} < \frac{1}{\sqrt K + 1}, \label{eq:goodcond}
\end{equation}
which coincides with the recovery condition of OMP in the noise-free case~\cite{wang2012Recovery,mo2012remarks}. The condition in \eqref{eq:goodcond} has also been shown to be nearly necessary for the exact support recovery with OMP since it cannot be further relaxed to $\delta_{K + 1} \leq \frac{1}{\sqrt K }$~\cite{herzet2012exact,wang2012Recovery,mo2012remarks}.
\vspace{2mm}

The following theorem gives a necessary condition for the exact support recovery with the OMP algorithm. 
\vspace{2mm}

\begin{theorem} [Necessary Condition] \label{thm:3}
If one wish to accurately recover the support of any $K$-sparse signal $\mathbf{x}$ from its noisy measurements $\mathbf{y} = \mathbf{\Phi x} + \mathbf{v}$ with OMP, then the SNR should satisfy
\begin{equation}
\sqrt{\text{SNR}} > \frac{\sqrt{K}(1 + \delta_{K + 1})}{( 1 - \sqrt{K} \delta_{K + 1} ) \cdot \sqrt{\text{MAR}}}.  \label{eq:jjjjffffa}
  \end{equation}
\end{theorem}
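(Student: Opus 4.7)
The plan is to prove the contrapositive by an adversarial construction. Whenever the SNR is at or below the stated threshold, I would exhibit a sensing matrix $\mathbf{\Phi}$ with isometry constant $\delta_{K+1}$, a $K$-sparse signal $\mathbf{x}$ realizing the prescribed MAR, and a noise vector $\mathbf{v}$ realizing the prescribed SNR, such that OMP picks an atom outside $\mathcal{T}$ at its very first iteration. Since a single wrong pick precludes exact support recovery, this yields the claim. The reduction to iteration~1 is immediate: OMP's first selection is $t^1=\arg\max_{i}|\langle \phi_i,\mathbf{y}\rangle|$, and failure occurs as soon as we arrange, for some distinguished $j_0\notin\mathcal{T}$, that $|\langle \phi_{j_0},\mathbf{y}\rangle|>\max_{i\in\mathcal{T}}|\langle \phi_i,\mathbf{y}\rangle|$.

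I would then fix $\mathcal{T}=\{1,\dots,K\}$ and $j_0=K+1$, and hand-build $\mathbf{\Phi}$ on $\mathcal{T}\cup\{j_0\}$ so that the Gram block $\mathbf{\Phi}'_{\mathcal{T}\cup\{j_0\}}\mathbf{\Phi}_{\mathcal{T}\cup\{j_0\}}$ simultaneously saturates the RIP-type inequalities in Lemmas~\ref{lem:rips}, \ref{lem:correlationrip}, and \ref{lem:rip5}; a natural choice has structured rank-one off-diagonal blocks with spectrum in $[1-\delta_{K+1},1+\delta_{K+1}]$ and cross inner products $\langle \phi_{j_0},\phi_i\rangle$ of magnitude $\delta_{K+1}/\sqrt{K}$, aligned with the sign pattern of the signal chosen next. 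For the signal, I would pick $\mathbf{x}$ supported on $\mathcal{T}$ with $|x_i|=x_{\min}$ for all $i\in\mathcal{T}$, so $\|\mathbf{x}\|_2^2=K x_{\min}^2/\mathrm{MAR}$ at the worst-case $\mathrm{MAR}=1$ (the general MAR case follows by dilating a single coordinate), with signs arranged so that Lemma~\ref{lem:correlationrip} is tight in the form $|\langle \phi_{j_0},\mathbf{\Phi x}\rangle|=\sqrt{K}\,\delta_{K+1}\,x_{\min}$ while Lemma~\ref{lem:rips} is tight in the form $\min_{i\in\mathcal{T}}|\langle \phi_i,\mathbf{\Phi x}\rangle|=(1-\sqrt{K}\,\delta_{K+1})\,x_{\min}$. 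For the noise, I would take $\mathbf{v}$ in the direction that simultaneously adds to $|\langle \phi_{j_0},\mathbf{v}\rangle|$ and weakens $\max_{i\in\mathcal{T}}|\langle \phi_i,\mathbf{y}\rangle|$, rescaled to $\|\mathbf{v}\|_2=\|\mathbf{\Phi x}\|_2/\sqrt{\mathrm{SNR}}$; Lemma~\ref{lem:rip5} then gives the extremal noise contribution on the wrong column of magnitude essentially $(1+\delta_{K+1})\|\mathbf{x}\|_2/\sqrt{\mathrm{SNR}}$ once $\|\mathbf{\Phi x}\|_2\le\sqrt{1+\delta_{K+1}}\|\mathbf{x}\|_2$ is substituted.

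Inserting these extremal values into the failure condition $|\langle \phi_{j_0},\mathbf{y}\rangle|>\max_{i\in\mathcal{T}}|\langle \phi_i,\mathbf{y}\rangle|$ and using $x_{\min}^2=\mathrm{MAR}\,\|\mathbf{x}\|_2^2/K$, the two sides balance precisely at
\[
\sqrt{\mathrm{SNR}}=\frac{\sqrt{K}\,(1+\delta_{K+1})}{(1-\sqrt{K}\,\delta_{K+1})\sqrt{\mathrm{MAR}}},
\]
so any SNR at or below this boundary produces failure, proving the necessary condition. The main technical obstacle is the \emph{simultaneous} tightness of Lemmas~\ref{lem:rips}, \ref{lem:correlationrip}, and \ref{lem:rip5} on a single matrix $\mathbf{\Phi}$ with a single signal $\mathbf{x}$ and a single noise direction $\mathbf{v}$. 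This is the noisy analog of the Mo--Shen-type extremal construction used for the noiseless necessary condition $\delta_{K+1}\ge 1/\sqrt{K}$ in \cite{mo2012remarks,wang2012Recovery}, now paired with an adversarial noise that consumes exactly the RIP margin $(1-\sqrt{K}\,\delta_{K+1})x_{\min}$; pinning down that the optimal noise gain is $(1+\delta_{K+1})\|\mathbf{x}\|_2/\sqrt{\mathrm{SNR}}$ rather than the $2(1+\delta_{K+1})$ appearing in the sufficient side of Theorem~\ref{thm:2} is what yields the factor-of-two gap between the necessary and sufficient conditions.
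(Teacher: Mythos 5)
Your overall strategy---prove necessity by exhibiting an adversarial instance on which OMP errs at the first iteration---is exactly the paper's strategy, but the paper executes it with a single, completely explicit example ($\mathbf{\Phi}$ the identity, so $\delta_{K+1}=0$; $\mathbf{x}$ the all-ones $K$-sparse vector, so $\mathrm{MAR}=1$; $\mathbf{v}$ a unit spike on one off-support coordinate, so $\mathrm{SNR}=K$), for which the first-iteration tie is immediate and the stated bound is met with equality. Your proposal instead tries to realize the bound for \emph{arbitrary} $\delta_{K+1}$ and $\mathrm{MAR}$ by a construction that simultaneously saturates Lemmas~\ref{lem:rips}, \ref{lem:correlationrip}, and \ref{lem:rip5}, and that construction is never actually exhibited; you flag it yourself as ``the main technical obstacle'' and then do not resolve it. That is a genuine gap, not a detail: as written, the required tightnesses are mutually contradictory. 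To put the factor $(1+\delta_{K+1})$ into the numerator you need $\|\mathbf{\Phi x}\|_2=\sqrt{1+\delta_{K+1}}\,\|\mathbf{x}\|_2$, which forces $\mathbf{x}_{\mathcal{T}}$ to be a top eigenvector of $\mathbf{\Phi}_{\mathcal{T}}'\mathbf{\Phi}_{\mathcal{T}}$; but then $\mathbf{\Phi}_{\mathcal{T}}'\mathbf{\Phi}_{\mathcal{T}}\mathbf{x}_{\mathcal{T}}=(1+\delta_{K+1})\mathbf{x}_{\mathcal{T}}$, so every correct-column correlation equals $(1+\delta_{K+1})x_{\min}$ --- the correct atoms become \emph{more} correlated with $\mathbf{\Phi x}$, not less, directly conflicting with your requirement that $\max_{i\in\mathcal{T}}|\langle\phi_i,\mathbf{\Phi x}\rangle|$ be driven down to $(1-\sqrt{K}\,\delta_{K+1})x_{\min}$. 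Conversely, the natural rank-one Gram construction with cross inner products $\delta_{K+1}/\sqrt{K}$ and $\mathbf{\Phi}_{\mathcal{T}}'\mathbf{\Phi}_{\mathcal{T}}=\mathbf{I}$ does give isometry constant exactly $\delta_{K+1}$ on $K+1$ columns, but it yields $\|\mathbf{\Phi x}\|_2=\|\mathbf{x}\|_2$ and hence only the weaker threshold $\sqrt{K}/\bigl((1-\sqrt{K}\,\delta_{K+1})\sqrt{\mathrm{MAR}}\bigr)$, missing the $(1+\delta_{K+1})$ factor. So the plan cannot be completed as described for general $\delta_{K+1}>0$.

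There is also a quantifier issue worth noting. The theorem, as the paper interprets it, only requires producing \emph{one} triple $(\mathbf{\Phi},\mathbf{x},\mathbf{v})$ violating the bound for which OMP fails; it does not claim that the bound is attained (with failure) at every point of the $(\delta_{K+1},\mathrm{MAR},\mathrm{SNR})$ parameter space. Your much stronger program --- failure everywhere at or below the threshold --- is not needed for the statement and, per the above, is not achievable in the exact form you sketch. If you want a complete proof, the shortest route is the paper's: write down the identity-matrix instance, verify $\mathrm{SNR}=K$, $\delta_{K+1}=0$, $\mathrm{MAR}=1$ saturates the negation of \eqref{eq:jjjjffffa}, and observe that the wrong column and the correct columns tie at the first identification step, so exact support recovery is not guaranteed.
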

\begin{proof}
See Appendix \ref{app:thm3}.
\end{proof}
\vspace{2mm}

\begin{remark}

\

\begin{enumerate}[i)]
\item
Loosely speaking, the lower bounds in \eqref{eq:jjjjffffa33} and \eqref{eq:jjjjffffa} can be matched within a constant factor of two. Interestingly, since ${\text{MAR}} \leq 1$ and $\delta_{K + 1} \geq 0$, one can directly obtain from  \eqref{eq:jjjjffffa} that \begin{equation}
\text{SNR} > K,
\end{equation} which implies that in order for OMP to accurately recover the support of any $K$-sparse signal, the SNR must at least scale linearly with the sparsity level $K$ of the signal. For high-dimensional setting, this essentially requires the SNR to be unbounded from above.
\vspace{2mm}

\item
The sufficient condition for the exact support recovery with OMP under a similar SNR setting has been studied in~\cite{fletcher2012orthogonal}, in which the authors considered large random measurement matrices. They provided asymptotic and probabilistic results on the scaling law for the number of measurements that ensures the exact support recovery of every fixed $K$-sparse signal when the SNR approaches to infinity. Our result in Theorem~\ref{thm:2} extends that in~\cite{fletcher2012orthogonal} by providing a deterministic condition that applies to general measurement matrices and also holds uniformly for all $K$-sparse signals.
\vspace{2mm}

\item
The result in Theorem~\ref{thm:2} is also closely related to the results  in~\cite{zhang2009consistency,cai2011orthogonal,shen2011sparse,wu2013exact,chang2014improved}, in which the authors considered the OMP algorithm with residual-based stopping rules by assuming that the noise level is known a priori. They established conditions that depend on the properties of measurement matrices and the minimum magnitude of nonzero elements in the signal $\mathbf{x}$. In fact, it can be shown that the conditions proposed in~\cite{zhang2009consistency,cai2011orthogonal,shen2011sparse,wu2013exact,chang2014improved} essentially impose a similar requirement on the size of measurements as our result established in Theorem~\ref{thm:2}. However, our analysis differs in 1) that we characterize the sufficient condition for the exact support recovery with a lower bound on the SNR and 2) that we consider the stopping rule that the OMP algorithm runs exact $K$ iterations before stopping and does not require the assumption of knowing the noise level. In addition, we provide in Theorem \ref{thm:3} the necessary condition analysis for the exact support recovery with OMP, for which there is no counterpart in the studies of~\cite{zhang2009consistency,cai2011orthogonal,shen2011sparse,wu2013exact,chang2014improved}.
\vspace{2mm}

\item
For the OMP algorithm, it had become usual, when recovering a $K$-sparse signal, to consider the performance of the algorithm after $K$ iterations. See, for instance,~\cite{tropp2004greed,tropp2007signal,davenport2010analysis,liu2012orthogonal,fletcher2012orthogonal,wang2012Recovery,mo2012remarks,chang2014improved}. Compared to stopping rules that are based on the residual tolerance (e.g.,~\cite{cai2011orthogonal}) or the number of maximally allowed iterations (e.g., $\lceil cK \rceil$ iterations, $c > 1$~\cite{zhang2011sparse,livshits2012efficiency,foucart2013stability,wang2012how,livshitz2014sparse}), running OMP for $K$ iterations is very natural in that it directly allows the algorithm to recover the exact support set without false alarms or missed detections since the $K$-th iterate is itself a $K$-sparse signal. This feature is appealing when one is concerned with the exact support recovery. 
\vspace{2mm}

\item
It is worth mentioning that for the noise-free case, the stopping rule of OMP does not require additional information about the data, since it can simply be $\|\mathbf{r}^k\|_2 = 0$ (or $\|\mathbf{\Phi}' \mathbf{r}^{k - 1}\|_\infty = 0$).\footnote{Without noise, the stopping rule of $\|\mathbf{r}^k\|_2 = 0$ (or $\|\mathbf{\Phi}' \mathbf{r}^{k - 1}\|_\infty = 0$) is essentially equivalent to running OMP for $K$ iterations.}
Whereas, in the noisy case, some prior information on the signal and the noise is often needed. For example, residual-based stopping rules may rely on the prior knowledge of the noise level~\cite{zhang2009consistency,cai2011orthogonal,shen2011sparse,wu2013exact,chang2014improved}, and running OMP for $K$ iterations requires to know a priori the sparsity level $K$ of the input signal. The assumption of knowing the sparsity (or its region) has been commonly made for the algorithm design and the performance analysis in the CS field (see, e.g.,~\cite{needell2010signal,donoho2006sparse,needell2009cosamp,dai2009subspace,blumensath2009iterative,foucart2011hard}). However, it should be noted that in many applications, the sparsity $K$ is often not available and moreover, the underlying signal may not be exactly sparse. A typical scenario is that the input signal is not exactly sparse but only approximately sparse  with a few significant nonzero coefficients.\footnote{Note that if the signal to be recovered is not even approximately sparse, then compressed sensing technique may not apply.} For this scenario, our analysis for the recovery of the exactly sparse signal can be readily extended to the recovery of significant nonzero coefficients of the signal, by treating the contribution of those small nonzero coefficients as part of noise and utilizing the techniques developed in~\cite{foucart2011hard,cai2011orthogonal}.
\vspace{2mm}

\end{enumerate}

\

\end{remark}

\subsection{Proof of Theorem~\ref{thm:2}}
Our proof of Theorem~\ref{thm:2} is essentially an extension of the proof technique in \cite[Theorem 3.4 and 3.5]{wang2012Generalized}. Note that \cite[Theorem 3.4 and 3.5]{wang2012Generalized} studied the recovery condition for the generalized OMP (gOMP) algorithm in the noise-free situation. Our contribution is to generalize the analysis to the noisy case. We would like to mention that~\cite{wang2012Generalized} also provided a noisy case analysis but focused only on the $\ell_2$-norm distortion of the signal recovery and the corresponding result is also weaker than the result established in this paper (see Section~\ref{sec:dist}).

The proof works by mathematical induction. For the convenience of stating the results, we say that OMP makes a success at an iteration if it selects a correct index at the iteration. We will first give a condition that guarantees the success of OMP at the first iteration. Then we will assume that OMP has been successful in the previous $k$ ($1 \leq k < K$) iterations and will derive a condition under which OMP also makes a success at the $(k + 1)$-th iteration. Finally, we will combine the two conditions to establish an overall condition for the OMP algorithm.

\begin{itemize}
\item 
\textbf{Success at the first iteration}:
\vspace{1mm}

From Table~\ref{tab:omp}, we know that at the first iteration, OMP selects the index $t^1$ corresponding to the column ${\phi}_{t^{1}} \in \mathbf{\Phi}$ that is most strongly correlated with the measurement vector $\mathbf{y}$. Hence,
  \begin{eqnarray}
    |\langle {\phi}_{t^{1}}, \mathbf{y} \rangle| \hspace{-2mm}& = & \hspace{-2mm} \max_{i \in \Omega} |\langle {\phi}_{i}, \mathbf{y} \rangle| \nonumber \\
    \hspace{-2mm} & \geq & \hspace{-2mm}\max_{i \in \mathcal{T}} |\langle {\phi}_{i}, \mathbf{y} \rangle| \nonumber \\
    \hspace{-2mm}
    & \geq & \hspace{-2mm}
    \sqrt{\frac{1}{|\mathcal{T}|} \sum_{i \in \mathcal{T}} \langle {\phi}_{i}, \mathbf{y} \rangle^2 } \nonumber \\
    \hspace{-2mm} & = & \hspace{-2mm}
    \frac{1}{\sqrt K} \| \mathbf{\Phi}_{\mathcal{T}}' \mathbf{y} \|_{2}
    \nonumber \\
    \hspace{-2mm} & \overset{(a)}{=} & \hspace{-2mm}
    \frac{1}{\sqrt K} \| \mathbf{\Phi}_{\mathcal{T}}' \mathbf{\Phi}_{\mathcal{T}} \mathbf{x}_{\mathcal{T}} + \mathbf{\Phi}_{\mathcal{T}}' \mathbf{v} \|_{2}
    \nonumber\\
    \hspace{-2mm}  & \overset{(b)}{\geq} & \hspace{-2mm} \frac{1}{\sqrt K} \left(\| \mathbf{\Phi}_{\mathcal{T}}'  \mathbf{\Phi}_{\mathcal{T}}
    \mathbf{x}_{\mathcal{T}} \|_{2} - \|\mathbf{\Phi}_{\mathcal{T}}' \mathbf{v}\|_{2} \right) \nonumber\\
     \hspace{-2mm}& \overset{(c)}{\geq} & \hspace{-2mm} \frac{1}{\sqrt K} \left((1- \delta_{K} ) \| \mathbf{x} \|_{2} - \|\mathbf{\Phi}_{\mathcal{T}}' \mathbf{v}\|_{2} \right) \nonumber \\
    \hspace{-2mm} & \overset{(d)}{\geq} & \hspace{-2mm} \frac{1}{\sqrt K} \left((1- \delta_{K} ) \| \mathbf{x} \|_{2} - \sqrt{1 + \delta_K}~\|\mathbf{v}\|_{2} \right), \nonumber \\
    \label{eq:try5}
  \end{eqnarray}
where (a) is because $\mathbf{y} = \mathbf{\Phi x} + \mathbf{v}$, (b) is from the triangle inequality, (c) is from the RIP, and (d) is due to Lemma~\ref{lem:rip5}.
\vspace{2mm}

On the other hand, if a wrong index is chosen at the first iteration (i.e., $t^{1} \notin \mathcal{T}$), then
  \begin{eqnarray}
    |\langle \phi_{t^1}, \mathbf{y} \rangle|
    &=& |\langle \phi_{t^1}, \mathbf{\Phi x} \rangle + \langle \phi_{t^1}, \mathbf{v} \rangle| \nonumber \\
    &\overset{(a)}{\leq}& |\langle \phi_{t^1}, \mathbf{\Phi x} \rangle| + |\langle \phi_{t^1}, \mathbf{v} \rangle| \nonumber \\
    &=& \| \phi'_{t^1} \mathbf{\Phi}_{\mathcal{T}} \mathbf{x}_{\mathcal{T}} \|_2 + \|\phi'_{t^1} \mathbf{v} \|_{2} \nonumber \\
    &\overset{(b)}{\leq}& \delta_{K+1} \left\| \mathbf{x} \right\|_{2} + \|\phi'_{t^1} \mathbf{v} \|_{2} \nonumber \\
    &\overset{(c)}{\leq}& \delta_{K+1} \left\| \mathbf{x} \right\|_{2} + \sqrt{1 + \delta_1} \|\mathbf{v}\|_{2}, \label{eq:365}
  \end{eqnarray}
where (a) uses the triangle inequality, (b) follows from Lemma~\ref{lem:correlationrip} and  (c) is due to Lemma~\ref{lem:rip5}. 
\vspace{2mm}

This, however, contradicts {\eqref{eq:try5}} whenever
  \begin{eqnarray}
      \lefteqn{\delta_{K + 1} \| \mathbf{x} \|_{2} + \sqrt{1 + \delta_1} ~\|\mathbf{v}\|_{2}} \nonumber \\
      && < \frac{1}{\sqrt{K}} \left((1- \delta_{K} ) \| \mathbf{x} \|_{2} - \sqrt{1 + \delta_K} ~\|\mathbf{v}\|_{2} \right).~~~\label{eq:k+1n440}
  \end{eqnarray}

Since $\delta_1 \leq \delta_K \leq \delta_{K + 1}$ (by the monotonicity of isometry constant), \eqref{eq:k+1n440} is guaranteed by
  \begin{eqnarray}
      \lefteqn{\delta_{K + 1} \| \mathbf{x} \|_{2} + \sqrt{1 + \delta_{K + 1}} \|\mathbf{v}\|_{2}} \nonumber \\
      && \hspace{-4mm} < \frac{1}{\sqrt{K}} \left((1- \delta_{K + 1} ) \| \mathbf{x} \|_{2} - \sqrt{1 + \delta_{K + 1}} \| \mathbf{v}\|_{2} \right). ~~~~~ \label{eq:15ge}
  \end{eqnarray}
Or equivalently,
  \begin{eqnarray}
      \lefteqn{\left( \frac{1 - \delta_{K + 1}}{\sqrt{K}} - \delta_{K + 1} \right) \|\mathbf{x}\|_2} \nonumber \\
      &&> \left(1 + \frac{1}{\sqrt{K}} \right) \sqrt{1 + \delta_{K + 1}} \|\mathbf{v}\|_2.  \label{eq:11000121}
  \end{eqnarray}
Furthermore, since
\begin{equation}
 \|\mathbf{\Phi x}\|_2 \leq \sqrt{1 + \delta_{K}} \|\mathbf{x}\|_2 \leq \sqrt{1 + \delta_{K + 1}} \|\mathbf{x}\|_2, \label{eq:RI}
 \end{equation}
and also noting that $\text{SNR} = \frac{\|\mathbf{\Phi x}\|_2^2}{\|\mathbf{v}\|_2^2}$, we can show that \eqref{eq:11000121} holds true if
\begin{equation}
 \frac{1 - \delta_{K + 1}}{\sqrt{K}} - \delta_{K + 1}   > \frac{1 + \delta_{K + 1}}{\sqrt{\text{SNR}}}\left( 1 + \frac{1}{\sqrt{K}} \right). \label{eq:firstite0}
\end{equation}
That is,
\begin{equation}
\sqrt{\text{SNR}} > \frac{(\sqrt K + 1) (1 + \delta_{K + 1}) }{1 - (\sqrt K + 1) \delta_{K + 1}}, \label{eq:firstite}
\end{equation}

Therefore, under \eqref{eq:firstite}, a correct index is chosen at the first iteration of OMP.
\vspace{3mm}

\item 
\textbf{Success at the general iteration}:
\vspace{1mm}

Assume that OMP has been successful in each of the previous $k$ ($1 \leq k < K$) iterations. Then,
  \begin{equation} \label{eq:200002}
    |\mathcal{T} \cap \mathcal{T}^{k}| = k.
  \end{equation}
Under this assumption, we will derive a condition that ensures OMP to make a success at the $(k+1)$-th iteration as well. 
\vspace{1mm}

For analytical convenience, we introduce two quantities. Let $u$ denote the largest value in $\{ |\langle \phi_{i}, \mathbf{r}^{k} \rangle| \}_{i \in \mathcal{T} \backslash \mathcal{T}^k}$ and let $v$ denote the largest value in $\{ |\langle \phi_{i}, \mathbf{r}^{k} \rangle| \}_{i \in \Omega \setminus (\mathcal{T} \cup \mathcal{T}^{k} )}$. Note that $\mathcal{T} \backslash \mathcal{T}^k$ and $\Omega \setminus (\mathcal{T} \cup \mathcal{T}^{k} )$ are the set of remaining correct indices and the set of remaining incorrect indices, respectively. Then it is clear that if
  \begin{equation}
    u > v,
  \end{equation}
a good index will be selected at the $(k+1)$-th iteration of OMP. The following proposition characterizes the lower bound of $u$ and the upper bound of $v$.
\vspace{1mm}

\begin{proposition}
    \label{prop:upperbound15}
    \begin{eqnarray}
 u \hspace{-2mm} &\geq& \hspace{-2mm} \frac{1}{\sqrt{K - k}} \left( (1 - \delta_{K}) \| \mathbf{x}_{\mathcal{T} \backslash \mathcal{T}^k} \|_{2} - \sqrt{1 + \delta_{K}} \| \mathbf{v} \|_{2} \right), \nonumber \label{eq:small5}  \\
      v \hspace{-2mm} & \leq & \hspace{-2mm} \delta_{K + 1}
       \| \mathbf{x}_{\mathcal{T} \backslash \mathcal{T}^k} \|_{2} + \sqrt{1 + \delta_{1}} ~ \|\mathbf{v}\|_2. \nonumber
      \label{eq:large5}
    \end{eqnarray}
  \end{proposition}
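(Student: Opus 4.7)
The plan is to expand the residual as $\mathbf{r}^k = \mathbf{P}^\bot_{\mathcal{T}^k} \mathbf{y}$ and exploit the induction hypothesis \eqref{eq:200002}, which gives $\mathcal{T}^k \subseteq \mathcal{T}$ and hence $|\mathcal{T} \setminus \mathcal{T}^k| = K - k$ and $|\mathcal{T} \cup \mathcal{T}^k| = K$. Because $\mathbf{P}^\bot_{\mathcal{T}^k}$ annihilates every column of $\mathbf{\Phi}_{\mathcal{T}^k}$, substituting $\mathbf{y} = \mathbf{\Phi x} + \mathbf{v}$ yields the decomposition
\begin{equation}
\mathbf{r}^k = \mathbf{P}^\bot_{\mathcal{T}^k}\mathbf{\Phi}_{\mathcal{T} \setminus \mathcal{T}^k} \mathbf{x}_{\mathcal{T} \setminus \mathcal{T}^k} + \mathbf{P}^\bot_{\mathcal{T}^k} \mathbf{v},
\end{equation}
which isolates a ``signal'' part supported on $K-k$ remaining correct indices and a ``noise'' part projected onto $\text{span}(\mathbf{\Phi}_{\mathcal{T}^k})^\bot$.

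For the lower bound on $u$, I would use the standard max-vs.-average trick: since $u$ is the maximum of $|\langle \phi_i, \mathbf{r}^k\rangle|$ over $i \in \mathcal{T} \setminus \mathcal{T}^k$, it satisfies $u \ge \frac{1}{\sqrt{K-k}} \|\mathbf{\Phi}'_{\mathcal{T} \setminus \mathcal{T}^k} \mathbf{r}^k\|_2$. Then I would apply the triangle inequality to the decomposition of $\mathbf{r}^k$, bounding the signal term from below via Lemma \ref{lem:rip6} applied with $\mathcal{S}_1 = \mathcal{T} \setminus \mathcal{T}^k$ and $\mathcal{S}_2 = \mathcal{T}^k$, which gives $\lambda_{\min}(\mathbf{\Phi}'_{\mathcal{T} \setminus \mathcal{T}^k} \mathbf{P}^\bot_{\mathcal{T}^k} \mathbf{\Phi}_{\mathcal{T} \setminus \mathcal{T}^k}) \ge 1 - \delta_K$ (using $|\mathcal{T} \cup \mathcal{T}^k| = K$), and bounding the noise term from above via Lemma \ref{lem:rip5} combined with the contractivity $\|\mathbf{P}^\bot_{\mathcal{T}^k} \mathbf{v}\|_2 \le \|\mathbf{v}\|_2$, followed by the monotonicity of $\delta_K$ to replace $\delta_{K-k}$ by $\delta_K$.

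For the upper bound on $v$, I would take any $i \in \Omega \setminus (\mathcal{T} \cup \mathcal{T}^k)$ and write $|\langle \phi_i, \mathbf{r}^k\rangle| \le |\langle \phi_i, \mathbf{P}^\bot_{\mathcal{T}^k} \mathbf{\Phi}_{\mathcal{T} \setminus \mathcal{T}^k} \mathbf{x}_{\mathcal{T} \setminus \mathcal{T}^k}\rangle| + |\langle \phi_i, \mathbf{P}^\bot_{\mathcal{T}^k} \mathbf{v}\rangle|$. The three sets $\{i\}$, $\mathcal{T} \setminus \mathcal{T}^k$, and $\mathcal{T}^k$ are pairwise disjoint with cardinalities summing to $K+1$, so the second inequality of Lemma \ref{lem:correlationrip} bounds the signal part by $\delta_{K+1} \|\mathbf{x}_{\mathcal{T} \setminus \mathcal{T}^k}\|_2$. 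For the noise part, Cauchy--Schwarz and the contractivity of the projection give $|\langle \phi_i, \mathbf{P}^\bot_{\mathcal{T}^k} \mathbf{v}\rangle| \le \|\phi_i\|_2 \|\mathbf{v}\|_2 \le \sqrt{1 + \delta_1} \|\mathbf{v}\|_2$, where the last step uses Lemma \ref{lem:rip5} with $|\mathcal{S}| = 1$. Taking the maximum over $i$ yields the stated bound.

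I expect the only nontrivial step to be the invocation of Lemma \ref{lem:rip6}: one must recognize that the operator $\mathbf{\Phi}'_{\mathcal{T} \setminus \mathcal{T}^k} \mathbf{P}^\bot_{\mathcal{T}^k} \mathbf{\Phi}_{\mathcal{T} \setminus \mathcal{T}^k}$ inherits its smallest eigenvalue from the RIP of the enlarged Gram matrix on $\mathcal{T} \cup \mathcal{T}^k$, and that the induction hypothesis is precisely what keeps the cardinality of this union at $K$ rather than $K+k$. Everything else is routine triangle-inequality bookkeeping together with monotonicity of the isometry constants.
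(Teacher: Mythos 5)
Your proposal is correct and follows essentially the same route as the paper: the same decomposition $\mathbf{r}^k = \mathbf{P}^\bot_{\mathcal{T}^k}\mathbf{\Phi}_{\mathcal{T}\setminus\mathcal{T}^k}\mathbf{x}_{\mathcal{T}\setminus\mathcal{T}^k} + \mathbf{P}^\bot_{\mathcal{T}^k}\mathbf{v}$, the max-vs-average step, Lemma~\ref{lem:rip6} for the $1-\delta_K$ lower bound on the signal term, Lemma~\ref{lem:correlationrip} with the $(K+1)$-cardinality count for the $v$ signal term, and Cauchy--Schwarz with projection contractivity for the $v$ noise term. The only (immaterial) deviation is the noise term in the $u$ bound, where you use Lemma~\ref{lem:rip5} plus $\|\mathbf{P}^\bot_{\mathcal{T}^k}\mathbf{v}\|_2 \le \|\mathbf{v}\|_2$ and monotonicity instead of the paper's maximum-eigenvalue argument via Lemma~\ref{lem:rip6}; both yield $\sqrt{1+\delta_K}\,\|\mathbf{v}\|_2$.
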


\begin{proof}
See Appendix \ref{app:upperbound1}.
\end{proof}

Using Proposition \ref{prop:upperbound15}, we obtain the sufficient condition to $u > v$ as
  \begin{eqnarray}
    \label{eq:sufficientommp45}
    \lefteqn{\hspace{-6mm}\frac{1}{\sqrt{K - k}} \left((1- \delta_{K}) \| \mathbf{x}_{\mathcal{T} \backslash \mathcal{T}^k} \|_{2} - \sqrt{1 + \delta_{K}} ~\|\mathbf{v}\|_2\right)} \nonumber\\
    && > {\delta_{K + 1}} \| \mathbf{x}_{\mathcal{T} \backslash \mathcal{T}^k} \|_{2} + \sqrt{1 + \delta_{1}}~ \|\mathbf{v}\|_2, \nonumber
  \end{eqnarray}
or equivalently,
  \begin{eqnarray}
      \lefteqn{\left( \frac{1 - \delta_{K + 1}}{\sqrt{K - k}} - \delta_{K + 1} \right) \|\mathbf{x}_{\mathcal{T} \backslash \mathcal{T}^k}\|_2} \nonumber \\
      &&> \left(1 + \frac{1}{\sqrt{K - k}} \right) \sqrt{1 + \delta_{K + 1}} \|\mathbf{v}\|_2.  \label{eq:110001212}
  \end{eqnarray}
Note that
\begin{eqnarray}
\lefteqn{\|\mathbf{x}_{\mathcal{T} \backslash \mathcal{T}^k}\| } \nonumber \\
 &\geq& \sqrt{|\mathcal{T} \backslash \mathcal{T}^k|} \min_{j \in \mathcal{T}} |x_j| \nonumber\\
&\overset{(a)}{\geq}&  \sqrt{K - k} \left(\frac{ \sqrt{\text{MAR}} \cdot \|\mathbf{x}\|_2}{\sqrt{K}} \right) \nonumber\\
&\overset{(b)}{\geq}& \sqrt{\frac{K - k}{K(1 + \delta_{K + 1})}} \cdot \sqrt{\text{MAR}} \cdot \|\mathbf{\Phi x}\|_2 \nonumber \\
&\overset{(c)}{=}& \sqrt{\frac{K - k}{K(1 + \delta_{K + 1})}} \cdot \sqrt{\text{MAR} \cdot \text{SNR}} \cdot \|\mathbf{v}\|_2,~~~\label{eq:firstite02}
\end{eqnarray}
where (a) is from the definition of MAR, (b) is from \eqref{eq:RI}, and (c) is because $\text{SNR} = \frac{\|\mathbf{\Phi x}\|_2^2}{\|\mathbf{v}\|_2^2}$. 
\vspace{2mm}

Using \eqref{eq:110001212} and \eqref{eq:firstite02}, we can show $u > v$ holds true if
  \begin{eqnarray}
      \lefteqn{\hspace{-12mm}\left( \frac{1 - \delta_{K + 1}}{\sqrt{K - k}} - \delta_{K + 1} \right) \sqrt{\frac{K - k}{K(1 + \delta_{K + 1})}} \cdot \sqrt{\text{MAR} \cdot \text{SNR}}} \nonumber \\
      &&> \left(1 + \frac{1}{\sqrt{K - k}} \right) \sqrt{1 + \delta_{K + 1}}.  \label{eq:1100012122}
  \end{eqnarray}
  That is,
\begin{equation}
\sqrt{\text{SNR}} > \frac{(1 + \delta_{K + 1}) (\sqrt{K - k} + 1) \sqrt K}{(1 - (\sqrt{K - k} + 1) \delta_{K + 1}) \sqrt{K - k} \cdot \sqrt{\text{MAR}}}. \label{eq:firstite2}
\end{equation}

Furthermore, observe that
  \begin{eqnarray}
\lefteqn{\frac{(1 + \delta_{K + 1}) (\sqrt{K - k} + 1) \sqrt K}{( 1 - (\sqrt{K - k} + 1) \delta_{K + 1}) \sqrt{K - k} \cdot \sqrt{\text{MAR}}}} \nonumber\\
&=& \frac{ \sqrt K (1 + \delta_{K + 1})}{( 1 - (\sqrt{K - k} + 1) \delta_{K + 1})  \cdot \sqrt{\text{MAR}}} \nonumber \\
&&\times \left(1 + \frac{1}{\sqrt{K - k}}\right) \nonumber\\
&\overset{(a)}{<}&
%
\frac{2 \sqrt K (1 + \delta_{K + 1}) }{( 1 - (\sqrt{K} + 1) \delta_{K + 1} )  \cdot \sqrt{\text{MAR}}}, \label{eq:yig2}
  \end{eqnarray}
  where (a) is from the assumption that $1 \leq k < K$ and hence $1 \leq \sqrt{K - k} < \sqrt K$. 
  \vspace{2mm}

Hence, using \eqref{eq:firstite2} and \eqref{eq:yig2}, we can show that $u > v$ is also ensured by
    \begin{equation}
\sqrt{\text{SNR}} > \frac{2 \sqrt K (1 + \delta_{K + 1})  }{( 1 - (\sqrt{K} + 1) \delta_{K + 1} )  \cdot \sqrt{\text{MAR}}},  \label{eq:jjjjffffabb}
  \end{equation}
Therefore, under \eqref{eq:jjjjffffabb}, OMP makes a success at the $(k+1)$-th iteration.
\end{itemize}
\vspace{2mm}

So far, we have obtained condition \eqref{eq:firstite} for the success of the first iteration and condition \eqref{eq:jjjjffffabb} for the success of the general iteration.
We now combing the two conditions to obtain an overall condition that ensures the selection of all support indices with the OMP algorithm.
\vspace{2mm}

Clearly the overall condition will be determined by the more restrictive one between conditions \eqref{eq:firstite} and \eqref{eq:jjjjffffabb}. Thus we compare the right-hand-side of \eqref{eq:firstite} and \eqref{eq:jjjjffffabb}.
Since
\begin{equation}
\frac{2 (1 + \delta_{K + 1})  \sqrt K}{( 1 - (\sqrt{K} + 1) \delta_{K + 1} )  \cdot \sqrt{\text{MAR}}} \geq \frac{(1 + \delta_{K + 1}) (\sqrt K + 1)}{1 - (\sqrt K + 1) \delta_{K + 1}}, \nonumber
\end{equation}
condition \eqref{eq:jjjjffffabb} is more restrictive than \eqref{eq:firstite} and hence becomes the overall condition for the OMP algorithm. The proof is thus complete.

\subsection{Recovery Distortion in $\ell_2$-norm} \label{sec:dist}
When all support indices of $\mathbf{x}$ have been recovered with OMP (i.e., $\mathcal{T}^K = \mathcal{T}$), we have
\begin{eqnarray}
\|\mathbf{x} - \mathbf{x}^K\|_2
& \overset{(a)}{=} & \|\mathbf{x}_{\mathcal{T}^K} - \mathbf{\Phi}^\dag_{\mathcal{T}^K} \mathbf{y}\|_2 \nonumber \\
& {=} & \|\mathbf{x}_{\mathcal{T}^K} - \mathbf{\Phi}^\dag_{\mathcal{T}^K} (\mathbf{\Phi}_{\mathcal{T}} \mathbf{x}_{\mathcal{T}} + \mathbf{v})\|_2 \nonumber \\
& \overset{(b)}{=} & \|\mathbf{x}_{\mathcal{T}^K} - \mathbf{\Phi}^\dag_{\mathcal{T}^K} (\mathbf{\Phi}_{\mathcal{T}^K} \mathbf{x}_{\mathcal{T}^K} + \mathbf{v})\|_2 \nonumber \\
& {=} & \|\mathbf{\Phi}^\dag_{\mathcal{T}^K} \mathbf{v}\|_2 \nonumber \\
& \overset{(c)}{\leq} & \frac{\|\mathbf{\Phi}_{\mathcal{T}^K} \mathbf{\Phi}^\dag_{\mathcal{T}^K} \mathbf{v}\|_2}{\sqrt{1 - \delta_{|\mathcal{T}^K|}}}  \nonumber \\
& = & \frac{\|\mathbf{P}_{\mathcal{T}^K}  \mathbf{v}\|_2}{\sqrt{1 - \delta_{K}}}  \nonumber \\
& \leq & \frac{\|\mathbf{v}\|_2}{\sqrt{1 - \delta_{K}}}, \label{eq:sgag}
\end{eqnarray}
where (a) is because $\mathcal{T}^K = \mathcal{T}$ and $$\mathbf{x}^{K} = \underset{\mathbf{u}:\textit{supp}(\mathbf{u}) = \mathcal{T}^K}{\arg \min} \|\mathbf{y}-\mathbf{\Phi} \mathbf{u}\|_2,$$ (b) is due to  $\mathcal{T}^K = \mathcal{T}$, and (c) is from the RIP. 

One can interpret from \eqref{eq:sgag} that the upper bound of the $\ell_2$-norm recovery distortion with OMP is just proportional with the noise energy, which outperforms the result in~\cite{wang2012Generalized} that suggested a recovery distortion upper bounded by $\mathcal{O}(\sqrt K) \|\mathbf{v}\|_2$.

\section{Approximate Support Recovery via OMP in the Presence of Noise} \label{sec:VI}
In the last section we have shown that the exact support recovery with OMP requires the SNR to scale linearly with the sparsity lever $K$. For high-dimensional setting, this would require the SNR to be unbounded. However, in practical applications, we are often facing with the situation where the SNR is bounded from above.
A particularly interesting case might be the case where the SNR is an absolute constant independent of the problem size. In this case, of course, the necessary condition (in Section~\ref{sec:III}) is not fulfilled so that the exact support recovery of all sparse signals with OMP is impossible. However, we will show that recovery with an arbitrarily small but constant fraction of errors is possible.

The following theorem demonstrates that for properly chosen isometry constants, there exists an absolute constant SNR, under which OMP can approximately recover the support of any $K$-sparse signal with a small constant fraction of errors.  
\vspace{2mm}

\begin{theorem} \label{thm:4}
Let $\kappa := \max_{i,j \in \mathcal{T}} \frac{|x_{i}|}{|x_{j}|}$. Then if $\text{SNR} \geq \kappa^2 \delta_{2K}^{-3/2}$, OMP recovers the support of $K$-sparse signal $\mathbf{x}$ from its noisy measurements $\mathbf{y} = \mathbf{\Phi x} + \mathbf{v}$ with error rate
\begin{equation}
\rho_{\text{error}} \leq C \kappa^2 \delta_{2K}^{1/2} \nonumber
\end{equation}
where $C$ is a constant. 
\end{theorem}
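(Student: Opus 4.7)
The plan is to translate the desired error-rate bound into a bound on the $\ell_2$-energy of the missed support, and then control this energy by analyzing how OMP's residual decays across its $K$ iterations. First, write $\mathcal{M} = \mathcal{T}\setminus\mathcal{T}^K$ for the missed indices. By the definition of $\kappa$, every nonzero entry satisfies $|x_j| \geq \|\mathbf{x}\|_\infty/\kappa \geq \|\mathbf{x}\|_2/(\kappa \sqrt{K})$, and therefore
\begin{equation}
\rho_{\text{error}} = \frac{|\mathcal{M}|}{K} \leq \frac{\kappa^2\, \|\mathbf{x}_{\mathcal{M}}\|_2^2}{\|\mathbf{x}\|_2^2}. \nonumber
\end{equation}
Hence it suffices to prove $\|\mathbf{x}_{\mathcal{M}}\|_2^2 \leq C \delta_{2K}^{1/2}\|\mathbf{x}\|_2^2$. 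Since $|\mathcal{T}^K \cup \mathcal{T}| \leq 2K$, Lemma \ref{lem:rip6} yields $\|\mathbf{P}^\perp_{\mathcal{T}^K} \mathbf{\Phi}_{\mathcal{M}} \mathbf{x}_{\mathcal{M}}\|_2 \geq \sqrt{1-\delta_{2K}}\,\|\mathbf{x}_{\mathcal{M}}\|_2$, and since $\mathbf{r}^K = \mathbf{P}^\perp_{\mathcal{T}^K}(\mathbf{\Phi}_{\mathcal{M}} \mathbf{x}_{\mathcal{M}} + \mathbf{v})$, the triangle inequality gives
\begin{equation}
\|\mathbf{x}_{\mathcal{M}}\|_2 \leq \frac{\|\mathbf{r}^K\|_2 + \|\mathbf{v}\|_2}{\sqrt{1-\delta_{2K}}}. \nonumber
\end{equation}
The SNR hypothesis $\text{SNR}\geq \kappa^2 \delta_{2K}^{-3/2}$ makes $\|\mathbf{v}\|_2^2 \leq \delta_{2K}^{3/2}\|\mathbf{\Phi x}\|_2^2/\kappa^2$, which is of lower order than the target $\delta_{2K}^{1/2}\|\mathbf{x}\|_2^2$. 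Thus the task reduces to showing $\|\mathbf{r}^K\|_2^2 \leq C\delta_{2K}^{1/2}\|\mathbf{x}\|_2^2$.

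The core of the argument is a per-iteration descent estimate on $\|\mathbf{r}^k\|_2^2$ that must be proved \emph{without} assuming that past iterations picked correct atoms. Starting from the orthogonal-projection identity $\|\mathbf{r}^{k+1}\|_2^2 \leq \|\mathbf{r}^k\|_2^2 - |\langle \phi_{t^{k+1}}, \mathbf{r}^k\rangle|^2 / (1+\delta_{2K})$, I would combine the greedy-max lower bound $|\langle \phi_{t^{k+1}}, \mathbf{r}^k \rangle|^2 \geq \|\mathbf{\Phi}'_{\bar S_k} \mathbf{r}^k\|_2^2/|\bar S_k|$ (with $\bar S_k = \mathcal{T}\setminus\mathcal{T}^k$) with the RIP estimate $\|\mathbf{\Phi}'_{\bar S_k} \mathbf{P}^\perp_{\mathcal{T}^k} \mathbf{\Phi}_{\bar S_k} \mathbf{x}_{\bar S_k}\|_2 \geq (1-\delta_{2K})\|\mathbf{x}_{\bar S_k}\|_2$ from Lemma \ref{lem:rip6}, and with Lemma \ref{lem:rip5} applied to the noise term. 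This is the same family of estimates used in Proposition \ref{prop:upperbound15}, but here they yield a recursion linking $\|\mathbf{r}^{k+1}\|_2^2$ to $\|\mathbf{r}^k\|_2^2$ and $\|\mathbf{x}_{\bar S_k}\|_2$. The quantity $\|\mathbf{x}_{\bar S_k}\|_2$ is itself lower-bounded by $(\|\mathbf{r}^k\|_2 - \|\mathbf{v}\|_2)/\sqrt{1+\delta_K}$ via the upper-RIP bound on $\|\mathbf{P}^\perp_{\mathcal{T}^k}\mathbf{\Phi}_{\bar S_k}\mathbf{x}_{\bar S_k}\|_2$, so the descent closes into a recursion on $\|\mathbf{r}^k\|_2$ alone. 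Iterating across $k = 0, \ldots, K-1$ and calibrating the decay constants against $\text{SNR}\geq \kappa^2 \delta_{2K}^{-3/2}$ should then deliver $\|\mathbf{r}^K\|_2^2 \leq C\delta_{2K}^{1/2}\|\mathbf{x}\|_2^2$, which when combined with the preceding paragraph yields $\rho_{\text{error}}\leq C \kappa^2 \delta_{2K}^{1/2}$.

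The main obstacle is the third step. All of the classical OMP-under-RIP arguments (including the proof of Theorem \ref{thm:2}) rely on the inductive hypothesis that every previous iteration selected a correct index; that hypothesis cannot be invoked here, because the very point of approximate recovery is that wrong picks are allowed. The delicate task is to choose the right exponent of $\delta_{2K}$ in the per-step contraction factor so that a product of $K$ such factors produces exactly $\delta_{2K}^{1/2}$ rather than a smaller or vacuous quantity, and to verify that incorrectly chosen iterations still contribute a usable (even if weaker) reduction of $\|\mathbf{r}^k\|_2$ rather than stalling the descent entirely.
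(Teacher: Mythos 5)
Your outer scaffolding matches the paper's in spirit: reduce the error rate to a bound on $\|\mathbf{x}_{\mathcal{T}\setminus\mathcal{T}^K}\|_2^2$, and obtain that bound from a per-iteration residual-decrease estimate that does not assume past selections were correct. (One bookkeeping remark: normalizing by $\|\mathbf{x}\|_2^2$ in your first display costs you a factor of $\kappa^2$ twice; the paper instead compares $\|\mathbf{x}_{\Gamma^K}\|_2^2$ directly against $|\Gamma^K|\,(x_{\min})^2$ so that the $(x_{\min})^2$ cancels and only a single $\kappa^2$ survives.)

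The genuine gap is in your third step, and it is exactly where you flag uncertainty. The recursion you propose — greedy-max lower bound $|\langle\phi_{t^{k+1}},\mathbf{r}^k\rangle|^2 \geq \|\mathbf{\Phi}'_{\bar S_k}\mathbf{r}^k\|_2^2/|\bar S_k|$ with $\bar S_k = \mathcal{T}\setminus\mathcal{T}^k$, closed into a contraction on $\|\mathbf{r}^k\|_2^2$ — loses a factor of $|\bar S_k|$, which can be as large as $K$. The resulting per-step factor is $1 - c/K$, so after $K$ iterations you only get $\|\mathbf{r}^K\|_2^2 \lesssim e^{-c}\|\mathbf{y}\|_2^2$, a constant-factor reduction rather than the required $O(\delta_{2K}^{1/2}\|\mathbf{x}\|_2^2)$. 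No choice of exponent of $\delta_{2K}$ in the contraction factor repairs this; it is the standard obstruction that forces other OMP analyses to run $O(K\log(\cdot))$ iterations. The paper's proof avoids it with a different device (following Livshitz--Temlyakov): fix $\tau = \delta_{2K}^{1/2}$, let $\Gamma^k_\tau$ be the $\lceil\tau K\rceil$ largest-magnitude remaining true indices and $x^k_\tau$ the $\lceil\tau K\rceil$-th largest remaining entry, and prove via a test vector $\mathbf{w}$ supported on $(\mathcal{T}\cap\mathcal{T}^k)\cup\Gamma^k_\tau$ together with H\"older's inequality that $\|\mathbf{\Phi}'\mathbf{r}^k\|_\infty^2 \geq (1-7\tau)(x^k_\tau)^2$. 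This gives an \emph{additive} decrease of $(1-7\tau)(x^k_\tau)^2$ per iteration; because the $x^k_\tau$ are tied to the sorted signal entries, telescoping over $K - \lceil\tau K\rceil$ iterations exhausts all of $\|\mathbf{x}\|_2^2$ except the top $\lceil\tau K\rceil$ entries plus an $O(\|\mathbf{v}\|_2^2/\tau)$ noise term, which is where both $\delta_{2K}^{1/2}$ and the SNR hypothesis enter. Without some version of this restriction to a $\lceil\tau K\rceil$-sized subset, your recursion cannot close.
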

\vspace{2mm}

\begin{remark}
\
\begin{enumerate}[i)]
\item
It is intuitively easy to see that the bound of error rate in Theorem \ref{thm:4} is reasonable because in the special case of orthonormal matrix $\mathbf{\Phi}$ (i.e., $\delta_{2K} = 0$), the result in Theorem \ref{thm:4} suggests that if $\text{SNR} = \infty$, then the error rate $\rho = 0$, which matches with the trivial fact that when there is no noise and $\mathbf{\Phi}$ is an orthonormal matrix, OMP can identify a correct index at each iteration and will accurately recover the whole support of signal $\mathbf{x}$ in exact $K$ iterations. \vspace{2mm}

\item
An interesting point we would like to mention is that our result for the approximate support recovery with OMP only requires the isometry constant $\delta_{2K}$ to be an absolute constant, which essentially imposes a mild constraint on the measurement matrix $\mathbf{\Phi}$. For example, for random Gaussian measurement matrices, it can be satisfied with 
\begin{equation}
 m \geq c K \log \frac{n}{K}
 \end{equation} 
for some constant $c$~\cite{candes2005decoding,baraniuk2008simple}. In CS, this implies that OMP can essentially perform the approximate support recovery of sparse signals with optimal number of random measurements up to a constant. 
\end{enumerate}
\end{remark}
\vspace{2mm}

It is well known that recovering sparse signals with nonzero elements of same magnitude is a particularly challenging case for the OMP algorithm~\cite{dai2009subspace,zhang2011sparse}. The following corollary provides the result of OMP on the approximate support recovery for this type of input signals.
\vspace{2mm}

\begin{corollary} \label{cor:1}
Consider $K$-sparse signals $\mathbf{x}$ with nonzero elements of equal magnitude. Then if $\text{SNR} \geq \delta_{2K}^{-3/2}$, OMP can recover the support of $\mathbf{x}$ from its noisy measurements $\mathbf{y} = \mathbf{\Phi x} + \mathbf{v}$ with error rate
\begin{equation}
\rho_{\text{error}} \leq C \delta_{2K}^{1/2} \nonumber
\end{equation}
where $C$ is a constant.
\end{corollary}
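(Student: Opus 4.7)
The plan is to observe that Corollary \ref{cor:1} is an immediate specialization of Theorem \ref{thm:4} to the equal-magnitude case, so essentially no new work is required beyond identifying the correct value of $\kappa$.

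First, I would recall the definition
\[
\kappa := \max_{i,j \in \mathcal{T}} \frac{|x_i|}{|x_j|}.
\]
For a $K$-sparse signal $\mathbf{x}$ whose nonzero entries all share a common magnitude $a > 0$, every ratio $|x_i|/|x_j|$ with $i,j \in \mathcal{T}$ equals $a/a = 1$, so $\kappa = 1$. This is the only signal-dependent quantity that enters the statement of Theorem \ref{thm:4}.

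Next, I would simply substitute $\kappa = 1$ into the SNR hypothesis and the error bound supplied by Theorem \ref{thm:4}. The hypothesis $\text{SNR} \geq \kappa^2 \delta_{2K}^{-3/2}$ collapses to
\[
\text{SNR} \geq \delta_{2K}^{-3/2},
\]
which is exactly the assumption made in Corollary \ref{cor:1}. Under this assumption, Theorem \ref{thm:4} then guarantees that OMP recovers the support of $\mathbf{x}$ from $\mathbf{y} = \mathbf{\Phi x} + \mathbf{v}$ with error rate
\[
\rho_{\text{error}} \leq C \kappa^2 \delta_{2K}^{1/2} = C \delta_{2K}^{1/2},
\]
for the same absolute constant $C$ as in Theorem \ref{thm:4}. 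This establishes the corollary with no additional estimation.

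Since the proof is a one-line consequence of Theorem \ref{thm:4}, there is no genuine obstacle: the only thing to be careful about is to verify that the equal-magnitude assumption is used only through $\kappa$, and that no other part of the proof of Theorem \ref{thm:4} implicitly needs a strict inequality or additional structure on $\mathbf{x}$. A quick glance at the statement confirms that $\kappa$ is the sole place where the dynamic range of $\mathbf{x}$ enters, so the reduction is clean.
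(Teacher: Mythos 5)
Your proof is correct and matches the paper's (implicit) reasoning: the corollary is stated immediately after Theorem~\ref{thm:4} precisely because setting $\kappa = 1$ for equal-magnitude signals reduces the hypothesis to $\text{SNR} \geq \delta_{2K}^{-3/2}$ and the bound to $C\delta_{2K}^{1/2}$. Your additional check that $\kappa$ is the only place the dynamic range of $\mathbf{x}$ enters the theorem is the right thing to verify, and it holds.
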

\vspace{2mm}

Our analysis is inspired by the recent work of Livshitz and Temlyakov~\cite{livshitz2014sparse} and will also rely on some proof techniques in~\cite{wang2012how}. The main idea behind our analysis is that most of support indices can essentially be你 identified in $K$ iterations of OMP and the number of false alarms is small. 
%
%
%
The proof of Theorem~\ref{thm:4} follows along a similar line as the proofs in~\cite{livshitz2014sparse}, but with three important distinctions. Firstly, the main goals of proofs are different. While our analysis is based on the approximate support recovery of $K$-sparse signals with $K$ iterations of OMP, the analysis in~\cite{livshitz2014sparse} concerned the exact support recovery with OMP in more than $K$ iterations. Secondly, compared to the result in~\cite{livshitz2014sparse}, our result is more general in that it applies to input signals with arbitrary sparsity level $K$ and with nonzero elements of arbitrary magnitudes. Note that the analysis of~\cite{livshitz2014sparse} assumed that $K \geq \delta_{2K}^{-1/2}$, which essentially applies to the situation where the sparsity $K$  of the input signal is nontrivial. In addition,~\cite{livshitz2014sparse} considered only the recovery of signals with magnitudes of nonzero elements upper bounded by one. Thirdly, and most importantly, we consider the scenario where the measurement noise is present and build conditions based on the SNR. Whereas, the analyses in~\cite{livshitz2014sparse} focused only on the situation without noise.

\subsection{Proof of Theorem~\ref{thm:4}}

Before we proceed to the details of the proof, we introduce some useful notations and definitions. For notational simplicity, let $\delta:= \delta_{2K}$. At the $k$-th iteration ($0 \leq k \leq K$), let $\Gamma^k: = \mathcal{T} \backslash \mathcal{T}^k$ denote the set of missed detection of support indices. For given constant $\tau \in (0, 1]$, let $\Gamma_\tau^k$ denote the subset of $\Gamma^k$ corresponding to the $\lceil \tau K \rceil$ largest elements (in magnitude) of $\mathbf{x}_{\Gamma^k}$. Also, let $x_\tau^k$ denote the $\lceil \tau K \rceil$-th largest element (in magnitude) in $\mathbf{x}_{\Gamma^k}$. Following the idea in~\cite{livshitz2014sparse}, we will fix $\tau = \delta^{1/2}$. If $\lceil \tau K \rceil > |\Gamma^k|$, then set $\Gamma^k_\tau = \Gamma^k$ and $x_\tau^k = 0$.
Since OMP totally runs $K$ iterations before stopping, the error rate of the support recovery can be given by
\begin{equation}
\rho_{\text{error}} := \frac{|\mathcal{T}^K \backslash \mathcal{T}|}{|\mathcal{T}|}.
\end{equation}

The proof of Theorem~\ref{thm:4} consists of two parts. In the first part, we will provide a lower bound on the reduction of residual energy at each iteration of OMP ({Proposition~\ref{lem:1}}). In the second part, by means of the lower bound obtained in Proposition~\ref{lem:1}, we will estimate the remaining energy in the residual vector $\mathbf{r}^K$. The estimate of the energy of $\mathbf{r}^K$ will then allow us to derive an upper bound on the number of missed detections (i.e., $|\Gamma^K|$). Since the OMP algorithm totally chooses $K$ indices, it is easy to see that the number missed detections (after $K$ iterations) is equal to the number of false alarms. i.e., 
\begin{equation}
|\Gamma^K| = |\mathcal{T}^K \backslash \mathcal{T}|,
\end{equation} 
which implies that 
\begin{equation}
\rho_{\text{error}} = \frac{|\Gamma^K|}{K}.
\end{equation} 
Therefore, from the upper bound of $|\Gamma^K|$ we can directly obtain an upper bound of the error rate for the support recovery with OMP.
\vspace{2mm}

\begin{proposition} \label{lem:1}
For any $0 \leq k \leq K - \lceil \delta^{1/2} K \rceil$, the residual of OMP satisfies
\begin{equation}
 \| \mathbf{r}^k \|_2^2 - \| \mathbf{r}^{k + 1} \|_2^2 \geq (1 - 7 \delta^{1/2} ) \left({x} ^k_{\delta^{1/2}}\right)^2. \nonumber
\end{equation}
\end{proposition}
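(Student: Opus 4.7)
My plan is to pass through three reductions. The first is the standard greedy-residual identity
\[
\|\mathbf{r}^k\|_2^2 - \|\mathbf{r}^{k+1}\|_2^2 \;=\; \frac{|\langle\phi_{t^{k+1}},\mathbf{r}^k\rangle|^2}{\|\mathbf{P}^{\bot}_{\mathcal{T}^k}\phi_{t^{k+1}}\|_2^2} \;\geq\; \frac{|\langle\phi_{t^{k+1}},\mathbf{r}^k\rangle|^2}{1+\delta},
\]
where the denominator is bounded by $\|\phi_{t^{k+1}}\|_2^2\leq 1+\delta_1\leq 1+\delta$ (Lemmas~\ref{lem:mono} and \ref{lem:rips}). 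The second reduction uses the OMP greedy rule and the fact that $\Gamma^k_\tau\subseteq\Omega\setminus\mathcal{T}^k$:
\[
|\langle\phi_{t^{k+1}},\mathbf{r}^k\rangle| \;\geq\; \max_{i\in\Gamma^k_\tau}|\langle\phi_i,\mathbf{r}^k\rangle| \;\geq\; \frac{\|\mathbf{\Phi}'_{\Gamma^k_\tau}\mathbf{r}^k\|_2}{\sqrt{\lceil\tau K\rceil}},
\]
so that everything comes down to lower bounding $\|\mathbf{\Phi}'_{\Gamma^k_\tau}\mathbf{r}^k\|_2$ with $\tau=\delta^{1/2}$.

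For the third reduction, I will expand $\mathbf{r}^k=\mathbf{P}^{\bot}_{\mathcal{T}^k}\bigl(\mathbf{\Phi}_{\Gamma^k_\tau}\mathbf{x}_{\Gamma^k_\tau}+\mathbf{\Phi}_{\Gamma^k\setminus\Gamma^k_\tau}\mathbf{x}_{\Gamma^k\setminus\Gamma^k_\tau}+\mathbf{v}\bigr)$, which is valid because $\mathbf{P}^{\bot}_{\mathcal{T}^k}\mathbf{\Phi}_{\mathcal{T}^k}=\mathbf{0}$ kills the contribution of all already-selected indices. Multiplying through by $\mathbf{\Phi}'_{\Gamma^k_\tau}$ produces three terms that I will handle separately. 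The diagonal term satisfies $\|\mathbf{\Phi}'_{\Gamma^k_\tau}\mathbf{P}^{\bot}_{\mathcal{T}^k}\mathbf{\Phi}_{\Gamma^k_\tau}\mathbf{x}_{\Gamma^k_\tau}\|_2 \geq (1-\delta)\|\mathbf{x}_{\Gamma^k_\tau}\|_2$ by Lemma~\ref{lem:rip6} applied on $\mathcal{S}_1=\Gamma^k_\tau$, $\mathcal{S}_2=\mathcal{T}^k$ (their union has size $\leq 2K$, so $\lambda_{\min}\geq 1-\delta_{2K}=1-\delta$). The cross term is bounded by $\delta\|\mathbf{x}_{\Gamma^k\setminus\Gamma^k_\tau}\|_2$ using Lemma~\ref{lem:correlationrip} on the pairwise disjoint triple $(\Gamma^k_\tau,\Gamma^k\setminus\Gamma^k_\tau,\mathcal{T}^k)$, again of total size $|\Gamma^k|+k\leq 2K$. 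The noise term is bounded by $\sqrt{1+\delta}\,\|\mathbf{v}\|_2$ by Lemma~\ref{lem:rip5} together with the non-expansiveness of $\mathbf{P}^{\bot}_{\mathcal{T}^k}$. Assembling by the triangle inequality gives
\[
\|\mathbf{\Phi}'_{\Gamma^k_\tau}\mathbf{r}^k\|_2 \;\geq\; (1-\delta)\|\mathbf{x}_{\Gamma^k_\tau}\|_2 \;-\; \delta\,\|\mathbf{x}_{\Gamma^k\setminus\Gamma^k_\tau}\|_2 \;-\; \sqrt{1+\delta}\,\|\mathbf{v}\|_2.
\]

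The remaining work is to express every term on the right in units of $x^k_\tau$. From the definition of $\Gamma^k_\tau$ we have $\|\mathbf{x}_{\Gamma^k_\tau}\|_2\geq\sqrt{\lceil\tau K\rceil}\,x^k_\tau$ and, since every entry of $\mathbf{x}_{\Gamma^k\setminus\Gamma^k_\tau}$ has magnitude at most $x^k_\tau$, $\|\mathbf{x}_{\Gamma^k\setminus\Gamma^k_\tau}\|_2\leq\sqrt{K}\,x^k_\tau$. For the noise I invoke the SNR hypothesis $\text{SNR}\geq\kappa^2\delta^{-3/2}$ of Theorem~\ref{thm:4}: combining $\|\mathbf{v}\|_2^2\leq\|\mathbf{\Phi x}\|_2^2/\text{SNR}\leq(1+\delta)\|\mathbf{x}\|_2^2\,\delta^{3/2}/\kappa^2$ with the elementary bound $\|\mathbf{x}\|_2^2\leq K\kappa^2\min_{i\in\mathcal{T}}|x_i|^2\leq K\kappa^2(x^k_\tau)^2$ yields $\|\mathbf{v}\|_2^2\leq(1+\delta)K\delta^{3/2}(x^k_\tau)^2$. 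Dividing by $\sqrt{\lceil\tau K\rceil}$, substituting $\tau=\delta^{1/2}$, and using $\sqrt{K/\lceil\tau K\rceil}\leq\delta^{-1/4}$, the three terms become $(1-\delta)x^k_\tau$, $\delta^{3/4}x^k_\tau$ and $(1+\delta)\delta^{1/2}x^k_\tau$, so
\[
\frac{\|\mathbf{\Phi}'_{\Gamma^k_\tau}\mathbf{r}^k\|_2}{\sqrt{\lceil\tau K\rceil}} \;\geq\; \bigl(1-\delta-\delta^{3/4}-(1+\delta)\delta^{1/2}\bigr)\,x^k_\tau.
\]
Squaring, dividing by $1+\delta$, and absorbing the $\delta,\delta^{3/4},\delta^{3/2}$ terms into $\delta^{1/2}$ for $\delta$ below an absolute threshold yields the claimed $(1-7\delta^{1/2})(x^k_\tau)^2$.

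The conceptual core is the splitting $\Gamma^k=\Gamma^k_\tau\sqcup(\Gamma^k\setminus\Gamma^k_\tau)$ combined with $\mathbf{P}^{\bot}_{\mathcal{T}^k}\mathbf{\Phi}_{\mathcal{T}^k}=\mathbf{0}$, which is precisely what keeps all the RIP index budgets capped at $2K$. The main obstacle is constant tracking: the $\delta^{3/4}$ from the cross correlation and the $(1+\delta)\delta^{1/2}$ from the noise are the dominant losses, and squaring plus the $1/(1+\delta)$ factor produces further lower-order contributions that one must carefully subsume inside the single $7\delta^{1/2}$ slack. A minor subtlety is the ceiling in $\lceil\tau K\rceil$, but the hypothesis $k\leq K-\lceil\delta^{1/2}K\rceil$ guarantees $|\Gamma^k|\geq\lceil\tau K\rceil$ so that $\Gamma^k_\tau$ has its full intended size and $\lceil\tau K\rceil\geq\tau K$ suffices elsewhere.
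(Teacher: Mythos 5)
Your argument is correct, but it reaches the key estimate by a genuinely different route than the paper. The paper also begins with the reduction $\|\mathbf{r}^k\|_2^2-\|\mathbf{r}^{k+1}\|_2^2\geq\|\mathbf{\Phi}'\mathbf{r}^k\|_\infty^2/(1+\delta_1)$ (your version, with the exact one-column projection identity and denominator $\|\mathbf{P}^\bot_{\mathcal{T}^k}\phi_{t^{k+1}}\|_2^2$, is in fact slightly sharper), but it then lower-bounds the sup-correlation by a duality argument: it applies H\"older's inequality with a test vector $\mathbf{w}$ supported on $(\mathcal{T}\cap\mathcal{T}^k)\cup\Gamma^k_\tau$, rewrites $\langle\mathbf{r}^k,\mathbf{\Phi}(\mathbf{w}-\mathbf{x}^k)\rangle$ via the polarization identity, and uses an AM--GM step $(u+v)^2\geq 4uv$ before invoking the SNR hypothesis --- the Livshitz--Temlyakov style of argument. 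You instead pass through $\max_{i\in\Gamma^k_\tau}|\langle\phi_i,\mathbf{r}^k\rangle|\geq\|\mathbf{\Phi}'_{\Gamma^k_\tau}\mathbf{r}^k\|_2/\sqrt{\lceil\tau K\rceil}$ and decompose $\mathbf{\Phi}'_{\Gamma^k_\tau}\mathbf{P}^\bot_{\mathcal{T}^k}(\mathbf{\Phi}_{\Gamma^k_\tau}\mathbf{x}_{\Gamma^k_\tau}+\mathbf{\Phi}_{\Gamma^k\setminus\Gamma^k_\tau}\mathbf{x}_{\Gamma^k\setminus\Gamma^k_\tau}+\mathbf{v})$ into diagonal, cross, and noise terms controlled by Lemmas~\ref{lem:correlationrip}, \ref{lem:rip5}, and \ref{lem:rip6}; this is structurally the same machinery the paper uses in Proposition~\ref{prop:upperbound15} for the \emph{exact}-recovery analysis, so your proof unifies the two parts of the paper and avoids the test-vector/polarization apparatus entirely. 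Your index-budget accounting ($|\Gamma^k|+|\mathcal{T}^k|\leq 2K$), the use of the SNR hypothesis to get $\|\mathbf{v}\|_2^2\leq(1+\delta)K\delta^{3/2}(x^k_\tau)^2$, and the final constants all check out: with $s=\delta^{1/2}$ one needs $(1-s-s^{3/2}-s^2-s^3)^2\geq(1-7s)(1+s^2)$, which follows from $a^2\geq 2a-1$ and holds on the whole relevant range.

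One small point to make explicit: the chain of inequalities (squaring the lower bound on $\|\mathbf{\Phi}'_{\Gamma^k_\tau}\mathbf{r}^k\|_2$, and the final absorption into $7\delta^{1/2}$) requires $1-\delta-\delta^{3/4}-(1+\delta)\delta^{1/2}\geq 0$. Rather than leaving this as ``$\delta$ below an absolute threshold,'' note that whenever $\delta^{1/2}>1/7$ the claimed bound is nonpositive and the proposition is trivial because $\|\mathbf{r}^k\|_2^2-\|\mathbf{r}^{k+1}\|_2^2\geq 0$ always; for $\delta^{1/2}\leq 1/7$ your base quantity is bounded below by a positive constant and the computation goes through. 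With that one sentence added, the proof is complete for all admissible $\delta$.
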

\vspace{2mm}

\begin{proof}
We shall prove Proposition~\ref{lem:1} in two steps. First, we show that the residual
power difference of OMP satisfies (see Appendix~\ref{app:fir})\footnote{This proof is essentially identical to a result in \cite{wang2012how}. Since it will play a key role in the proof of \eqref{eq:sec0}, we include the proof for completeness.}
\begin{equation} \label{eq:fir}
  \|\mathbf{r}^k\|_2^2 - \|\mathbf{r}^{k + 1}\|_2^2 \geq \frac{\|\mathbf{\Phi}' \mathbf{r}^k\|_\infty^2}{1 + \delta_{1}}.
\end{equation}

In the second step, we show that (see Appendix \ref{app:sec0})
  \begin{equation} \label{eq:sec0}
\|\mathbf{\Phi}' \mathbf{r}^k \|_\infty^2 \geq (1 - 7 \tau ) \left({x} ^k_{\tau} \right)^2.
  \end{equation}
Using \eqref{eq:fir} and \eqref{eq:sec0},
we have
  \begin{eqnarray} \label{eq:lemr1}
 \| \mathbf{r}^k \|_2^2 - \| \mathbf{r}^{k + 1} \|_2^2 &\geq&  \left(\frac{1 - 6 \tau}{1 + \delta_1} \right) ({x} ^k_{\tau})^2 \nonumber \\
 & \overset{(a)}{\geq} & \left(\frac{1 - 6 \tau}{1 + \delta} \right) ({x} ^k_{\tau})^2 \nonumber \\
 &{=} & \left(\frac{1 - 6 \tau}{1 + \tau^2} \right) ({x} ^k_{\tau})^2 \nonumber \\
 & = & \left(1 - 7 \tau + \frac{\tau - \tau^2 + 7 \tau^3}{1 + \tau^2} \right) ({x} ^k_{\tau})^2 \nonumber \\
 &\overset{(b)}{\geq}& \left(1 - 7 \tau \right) ({x} ^k_{\tau})^2 \nonumber \\
 &=& (1 - 7 \delta^{1/2} ) \left({x} ^k_{\delta^{1/2}} \right)^2,
\end{eqnarray}
where (a) is because $\delta =\delta_{2K} \geq \delta_1$ and (b) is from $\tau = \delta^{1/2} \in [0,1)$,
which establishes the proposition.
\end{proof}
\vspace{3mm}

In Proposition \ref{lem:1}, we have shown that each iteration of OMP makes non-trivial progress by providing the lower bound on the reduction of residual energy at each iteration. Next, using the bound obtained in Proposition \ref{lem:1}, we will derive an upper bound on the number of missed detections in the support recovery of OMP.
Without loss of generality we assume that $\mathcal{T} = \{1, \cdots, K\}$ and that the elements of $\{x_i\}_{i = 1}^K$ are in a descending order of their magnitudes.  Then from the definition of $x_\tau^k$ we have that for any $k \geq 0$, $k + \lceil \tau K \rceil \leq K$,
\begin{equation}
|x_\tau^k| \geq |x_{k + \lceil \tau K \rceil}|. \label{eq:22good}
\end{equation}
By applying Proposition~\ref{lem:1}, we have
\begin{eqnarray}
\|\mathbf{r}^K\|_2^2 & = & \|\mathbf{r}^0\|_2^2 - \sum_{k = 0}^{K - 1} (\|\mathbf{r}^k\|_2^2 - \|\mathbf{r}^{k + 1}\|_2^2 ) \nonumber \\
& \overset{(a)}{\leq} &  \|\mathbf{y}\|_2^2 - \sum_{k = 1}^{K - \lceil \tau K \rceil} (\|\mathbf{r}^k\|_2^2 - \|\mathbf{r}^{k + 1}\|_2^2 ) \nonumber \\
& \leq & \|\mathbf{y}\|_2^2 - \sum_{k = 1}^{K - \lceil \tau K \rceil} \left(1 - 7 \tau \right) ({x} ^k_{\tau})^2  \nonumber \\
& \overset{(b)}{\leq} & \|\mathbf{y}\|_2^2 - \sum_{k = 1}^{K - \lceil \tau K \rceil} \left(1 - 7 \tau \right) (x_{k + \lceil \tau K \rceil})^2 \nonumber \\
& = & \|\mathbf{y}\|_2^2 - \sum_{k = \lceil \tau K \rceil + 1}^{K} \left(1 - 7 \tau \right) (x_i)^2 \label{eq:23eq}
\end{eqnarray}
where (a) uses the facts that $\lceil \tau K \rceil \geq 1$ and that the energy of residual of the OMP algorithm is always non-increasing with the number of iterations (i.e., $\|\mathbf{r}^k\|_2^2 \geq \|\mathbf{r}^{k + 1}\|_2^2$, $k \geq 0$), and (b) is from \eqref{eq:22good}.

Note that
\begin{eqnarray}
\|\mathbf{y}\|_2^2
& = & \|\mathbf{\Phi x + v}\|_2^2  \nonumber \\
& \overset{(a)}{\leq} & (1 + \tau) \|\mathbf{\Phi x}\|_2^2 + \left(1 + {1}/{\tau} \right) \|\mathbf{v}\|_2^2 \nonumber \\
& \overset{(b)}{\leq} & (1 + \tau) (1 - \delta) \|\mathbf{x}\|_2^2 + \left(1 + {1}/{\tau} \right) \|\mathbf{v}\|_2^2 \nonumber \\
& \overset{(c)}{\leq} & (1 + 3\tau) \|\mathbf{x}\|_2^2 + \left(1 + {1}/{\tau} \right) \|\mathbf{v}\|_2^2 \nonumber \\
& = & (1 + 3\tau) \sum_{i = 1}^K (x_i)^2 + \left(1 + {1}/{\tau} \right) \|\mathbf{v}\|_2^2,
\end{eqnarray}
where (a) is from the fact that $$\|\mathbf{u} + \mathbf{v}\|_2^2 \leq (1 + \tau) \|\mathbf{u}\|_2^2 +\left(1 + {1}/{\tau}\right) \|\mathbf{v}\|_2^2$$
with $\mathbf{u} = \mathbf{\Phi x}$, (b) is due to the RIP,
and (c) is from that $$(1 + \tau)(1 + \delta) \leq (1 + \tau)^2 \leq 1 + 3 \tau.$$

Hence, we can rewrite \eqref{eq:23eq} as
\begin{eqnarray}
\|\mathbf{r}^K\|_2^2 & \leq & \|\mathbf{y}\|_2^2 - \sum_{k = \lceil \tau K \rceil + 1}^{K} \left(1 - 7 \tau \right) (x_i)^2 \nonumber \\
& {\leq} & (1 + 3\tau) \sum_{i = 1}^K (x_i)^2 - \sum_{i = \lceil \tau K \rceil + 1}^{K} \left(1 - 7 \tau \right) (x_i)^2   \nonumber \\
 && + \left(1 + {1}/{\tau} \right) \|\mathbf{v}\|_2^2 \nonumber \\
& \leq & 10 \tau \sum_{i = 1}^K (x_i)^2 + \sum_{i = 1}^{\lceil \tau K \rceil}   (x_i)^2 + \left(1 + {1}/{\tau} \right) \|\mathbf{v}\|_2^2 \nonumber \\
& \leq & (10 \tau K + {\lceil \tau K \rceil})   (x_{\max})^2     + \left(1 + {1}/{\tau} \right) \|\mathbf{v}\|_2^2 \nonumber \\
& \leq & 11 {\lceil \tau K \rceil}   (x_{\max})^2  + \left(1 + {1}/{\tau} \right) \|\mathbf{v}\|_2^2. \label{eq:25me} 
\end{eqnarray}

On the other hand,
\begin{eqnarray}
\|\mathbf{r}^K\|_2^2 & = & \|\mathbf{\Phi} (\mathbf{x} - \mathbf{x}^K) + \mathbf{v}\|_2^2 \nonumber \\
& \overset{(a)}{\geq} & (1 - \tau) \|\mathbf{\Phi} (\mathbf{x} - \mathbf{x}^K)\|_2^2 - \left( {1}/{\tau} - 1 \right) \|\mathbf{v}\|_2^2 \nonumber \\
& \overset{(b)}{\geq} & (1 - \tau) (1 - \delta)\|\mathbf{x} - \mathbf{x}^K\|_2^2 - \left( {1}/{\tau} - 1 \right) \|\mathbf{v}\|_2^2 \nonumber \\
& \overset{(c)}{\geq} & (1 - 2\tau) \|\mathbf{x} - \mathbf{x}^K\|_2^2 - \left( {1}/{\tau} - 1 \right) \|\mathbf{v}\|_2^2 \nonumber \\
& {\geq} & (1 - 2\tau) \|(\mathbf{x} - \mathbf{x}^K)_{\Gamma^K}\|_2^2 - \left( {1}/{\tau} - 1 \right) \|\mathbf{v}\|_2^2 \nonumber \\
& \overset{(d)}{\geq} & (1 - 2\tau) \|\mathbf{x}_{\Gamma^K}\|_2^2 - \left( {1}/{\tau} - 1 \right) \|\mathbf{v}\|_2^2, \label{eq:26me}
\end{eqnarray} 
where (a) uses the fact that $$\|\mathbf{u} + \mathbf{v}\|_2^2 \geq (1 - \tau) \|\mathbf{u}\|_2^2 - \left( {1}/{\tau} - 1\right) \|\mathbf{v}\|_2^2$$
with $\mathbf{u} = \mathbf{\Phi} (\mathbf{x} - \mathbf{x}^K)$, (b) follows from the RIP, (c) is because $$(1 - \tau)(1 - \delta) \geq (1 - \tau)^2 = 1 - 2 \tau + \tau^2 \geq 1 - 2 \tau,$$ and (d) is due to the fact that $\mathbf{x}^K$ is supported on $\mathcal{T}^K$ and hence $\mathbf{x}^K_{\Gamma^K} = \mathbf{x}^K_{\mathcal{T} \backslash \mathcal{T}^K} = \mathbf{0}$.
\vspace{2mm}

Using \eqref{eq:25me} and \eqref{eq:26me}, we have
\begin{eqnarray}
\lefteqn{\|\mathbf{x}_{\Gamma^K}\|_2^2} \nonumber \\
&\leq& \frac{1}{1 - 2\tau} \left(11 {\lceil \tau K \rceil}   (x_{\max})^2  + \frac{2}{\tau} \|\mathbf{v}\|_2^2\right) \nonumber \\
&\overset{(a)}{\leq}& \frac{1}{1 - 2\tau} \left(11 {\lceil \tau K \rceil}   (x_{\max})^2 + 2 \tau^2 (1 + \tau^2) K (x_{\min})^2 \right) \nonumber \\
&{\leq}& \frac{1}{1 - 2\tau} \left(11 {\lceil \tau K \rceil} \kappa^2  + 2 \tau^2 (1 + \tau^2) K  \right) (x_{\min})^2 \nonumber \\
&{\leq}& \left(\frac{11  \kappa^2  + 2 \tau (1 + \tau^2)}{1 - 2\tau} \right) {\lceil \tau K \rceil} (x_{\min})^2 \nonumber \\
&\leq& C \kappa^2 \tau K (x_{\min})^2 \nonumber \\
&=& C \kappa^2 \delta^{1/2} K (x_{\min})^2,
\end{eqnarray}
where (a) is due to the facts 1) that $$\text{SNR} = \frac{\|\mathbf{\Phi x}\|_2^2}{\|\mathbf{v}\|_2^2} \geq \kappa^2 \delta_{2K}^{-3/2} =  \frac{\kappa^2}{\tau^3},$$ and 2) that
\begin{eqnarray}
\|\mathbf{\Phi x}\|_2^2  & \overset{\text{RIP}}{\leq} & (1 + \delta) {\|\mathbf{x}\|_2^2} \nonumber \\
& {\leq} & (1 + \delta) K (x_{\max})^2 \nonumber \\
& {\leq} & (1 + \delta) K (\kappa x_{\min})^2, \nonumber
\end{eqnarray}
and hence
\begin{eqnarray}
\frac{2}{\tau} \|\mathbf{v}\|_2^2
&\leq& 2 \tau^2 (1 + \delta) K (x_{\min})^2 \nonumber \\
&=& 2 \tau^2 (1 + \tau^2) K (x_{\min})^2. \nonumber
\end{eqnarray}

Finally, by noting that \begin{equation}
\|\mathbf{x}_{\Gamma^K}\|_2^2 \geq |\Gamma^K| (x_{\min})^2, \nonumber
\end{equation}
we have that the number of missed detections satisfies
\begin{equation}
|\Gamma^K| \leq C \kappa^2 \delta^{1/2} K.
\end{equation}
Recall that the number missed detections (after $K$ iterations) is essentially equal to the number of false alarms (i.e., $|\Gamma^K| = |\mathcal{T}^K \backslash \mathcal{T}|$).
Therefore, the error rate of support recovery with OMP satisfies
\begin{equation}
\rho_{\text{error}} = \frac{|\mathcal{T}^K \backslash \mathcal{T}|}{|\mathcal{T}|} = \frac{|\Gamma^K|}{K} \leq C \kappa^2 \delta^{1/2}.
\end{equation}
The proof is now complete.

\section{Conclusion} \label{sec:V}

In this paper, we have studied the performance of OMP for the support recovery of sparse signals under noise. In the first part of our analysis, we have shown that in order for the OMP algorithm to accurately recover the support of any $K$-sparse signal, the SNR must be at least proportional to the sparsity level $K$ of the signal. For high-dimensional setting, our result indicates that the exact support recovery with OMP is not possible under finite SNR. 

In the second part of our analysis, we have considered a practical scenario where the SNR is an absolute constant independent of the sparsity $K$. While the exact support recovery with OMP is not possible for this scenario, our analysis has shown that recovery with an arbitrarily small but constant fraction of errors is possible. For high-dimensional setting, our result offers an affirmative answer to the open question of whether OMP can perform the approximate support recovery of sparse signals with bounded SNR~\cite{fletcher2012orthogonal}. We would like to point out a technical limitation in this result. Unlike existing results for the exact support recovery that depend on the minimum magnitude of nonzero elements in the signal, our result for the approximate support recovery exhibits the dependence on the minimum as well as the maximum magnitudes (more precisely, the ratio $\kappa$). Deriving a similar result but without the dependence on the maximum magnitude would require a more refined analysis and our future work will be directed towards this avenue.

\appendices

\section{Proof of Theorem~\ref{thm:3}} \label{app:thm3}

\begin{proof}
To prove the necessity of the lower bound of SNR in \eqref{eq:jjjjffffa}, it suffices to show that OMP may fail to recover the support of some sparse signal $\mathbf{x}$ when
\begin{equation}
\sqrt{\text{SNR}} \leq \frac{\sqrt{K}(1 + \delta_{K + 1})}{( 1 - \sqrt{K} \delta_{K + 1} ) \cdot \sqrt{\text{MAR}}}.  \label{eq:jjjjffffaf}
  \end{equation}
In the following, we will show that there exists a set of  $\mathbf{\Phi}$, $\mathbf{x}$, and $\mathbf{v}$, for which \eqref{eq:jjjjffffaf} is satisfied but OMP fails to recover the support of $\mathbf{x}$. 
\vspace{2mm}

Consider an identity matrix $\mathbf{\Phi}^{m \times m}$, a $K$-sparse signal $\mathbf{x} \in \mathcal{R}^m$ with all nonzero elements equal to one, and an $1$-sparse noise vector $\mathbf{v} \in \mathcal{R}^m$ as follows,
    \begin{equation}
  \mathbf{\Phi} = \left[ \begin{array}{cccc}
    1 &      &        &   \\
      & 1    &        &   \\
      &      & \ddots &   \\
      &      &        & 1
  \end{array} \right],   \mathbf{x} = \left[ \begin{array}{c}
    1\\
    \vdots\\
    1\\
    0\\
    \vdots \\
    0
  \end{array} \right], ~~\text{and}~~
  \mathbf{v} = \left[ \begin{array}{c}
    0\\
    \vdots \\
    0 \\
    1
  \end{array} \right]. \nonumber
\nonumber
\end{equation}
Then the measurements are given by
\begin{equation}
  \mathbf{y} = \left[ \begin{array}{c}
    1\\
    \vdots\\
    1\\
    0\\
    \vdots \\
    0\\
    1
  \end{array} \right]. \nonumber
\end{equation}
In this case, we have $\delta_{K + 1} = 0$,
\begin{equation}
\text{SNR} = \frac{\| \mathbf{\Phi} \mathbf{x} \|_2^2}{\| \mathbf{v} \|_2^2} = K ~~\text{and}~~ {\text{MAR}} = 1.
\nonumber
\end{equation}
It is easily verified that condition \eqref{eq:jjjjffffaf} is satisfied; however, OMP fails to recover the support of $\mathbf{x}$. Specifically, OMP is not guaranteed to make a correct selection at the first iteration.
\end{proof}

\section{Proof of Proposition \ref{prop:upperbound15}}\label{app:upperbound1}

\begin{proof}
We first give a proof of {\eqref{eq:small5}}. Since $u$ is the largest value in $\{ |\langle \phi_{i},
    \mathbf{r}^{k} \rangle| \}_{i \in \mathcal{T} \backslash \mathcal{T}^k}$,
    \begin{eqnarray}
u & = & \max_{i \in \mathcal{T} \backslash \mathcal{T}^k} |\langle \phi_{i}, \mathbf{r}^{k} \rangle| \nonumber\\
      & {\geq} & \frac{1}{\sqrt{|\mathcal{T} \backslash \mathcal{T}^k|}}  \sqrt{\sum_{i \in \mathcal{T} \backslash \mathcal{T}^k}
      \langle \phi_{i}, \mathbf{r}^{k} \rangle^2} \nonumber\\
      & \overset{(a)}{=} & \frac{1}{\sqrt{K - k}} \| \mathbf{\Phi}'_{\mathcal{T} \backslash \mathcal{T}^k}
      \mathbf{r}^{k} \|_{2} \nonumber\\
      & = & \frac{1}{\sqrt{K - k}} \| \mathbf{\Phi}'_{\mathcal{T} \backslash \mathcal{T}^k}
      \mathbf{P}^\bot_{\mathcal{T}^k} (\mathbf{\Phi x} + \mathbf{v}) \|_{2} \nonumber\\
      & \overset{(b)}{\geq} & \frac{1}{\sqrt{K - k}}
      \left( \| \mathbf{\Phi}'_{\mathcal{T} \backslash \mathcal{T}^k} \mathbf{P}^\bot_{\mathcal{T}^k}  \mathbf{\Phi} \mathbf{x} \|_2 - \| \mathbf{\Phi}'_{\mathcal{T} \backslash \mathcal{T}^k} \mathbf{P}^\bot_{\mathcal{T}^{k}} \mathbf{v} \|_{2} \right), \nonumber\\ \label{eq:ggeeewwwq}
    \end{eqnarray}
where (a) is from \eqref{eq:200002} and (b) is due to the triangle inequality.

Observe that
    \begin{eqnarray}
 \lefteqn{ \| \mathbf{\Phi}'_{\mathcal{T} \backslash \mathcal{T}^k} \mathbf{P}^\bot_{\mathcal{T}^k} \mathbf{\Phi} \mathbf{x} \|_2 } \nonumber \\  & \overset{(a)}{=} &
  \| \mathbf{\Phi}'_{\mathcal{T} \backslash \mathcal{T}^k} \mathbf{P}^\bot_{\mathcal{T}^k} \mathbf{\Phi}_{\mathcal{T} \backslash \mathcal{T}^k} \mathbf{x}_{\mathcal{T} \backslash \mathcal{T}^k}\|_2 \nonumber \\
  & \overset{(b)}{\geq} &
  \frac{\|  \mathbf{x}'_{\mathcal{T} \backslash \mathcal{T}^k} \mathbf{\Phi}'_{\mathcal{T} \backslash \mathcal{T}^k} \mathbf{P}^\bot_{\mathcal{T}^k} \mathbf{\Phi}_{\mathcal{T} \backslash \mathcal{T}^k} \mathbf{x}_{\mathcal{T} \backslash \mathcal{T}^k}\|_2}{\| \mathbf{x}'_{\mathcal{T} \backslash \mathcal{T}^k}\|_2} \nonumber \\
  & \overset{(c)}{=} &
  \frac{\|  \mathbf{x}'_{\mathcal{T} \backslash \mathcal{T}^k} \mathbf{\Phi}'_{\mathcal{T} \backslash \mathcal{T}^k} (\mathbf{P}^\bot_{\mathcal{T}^k})' \mathbf{P}^\bot_{\mathcal{T}^k} \mathbf{\Phi}_{\mathcal{T} \backslash \mathcal{T}^k} \mathbf{x}_{\mathcal{T} \backslash \mathcal{T}^k}\|_2}{\| \mathbf{x}_{\mathcal{T} \backslash \mathcal{T}^k}\|_2} \nonumber \\
  & = &
  \frac{\|  \mathbf{P}^\bot_{\mathcal{T}^k} \mathbf{\Phi}_{\mathcal{T} \backslash \mathcal{T}^k} \mathbf{x}_{\mathcal{T} \backslash \mathcal{T}^k}\|_2^2}{\| \mathbf{x}_{\mathcal{T} \backslash \mathcal{T}^k}\|_2} \nonumber \\
  & \geq & \frac{\lambda_{\min} \left( (\mathbf{P}^\bot_{\mathcal{T}^k} \mathbf{\Phi}_{\mathcal{T} \backslash \mathcal{T}^k})' \mathbf{P}^\bot_{\mathcal{T}^k} \mathbf{\Phi}_{\mathcal{T} \backslash \mathcal{T}^k} \right)  \| \mathbf{x}_{\mathcal{T} \backslash \mathcal{T}^k} \|_{2}^2 }{\| \mathbf{x}_{\mathcal{T} \backslash \mathcal{T}^k}\|_2} \nonumber \\
  & \overset{(d)}{=} & \lambda_{\min} \left( \mathbf{\Phi}'_{\mathcal{T} \backslash \mathcal{T}^k} \mathbf{P}^\bot_{\mathcal{T}^k} \mathbf{\Phi}_{\mathcal{T} \backslash \mathcal{T}^k} \right)  \| \mathbf{x}_{\mathcal{T} \backslash \mathcal{T}^k} \|_{2} \nonumber \\
  & \overset{(e)}{\geq} & \lambda_{\min} ({ \mathbf{\Phi}'_{\mathcal{T}
      \cup \mathcal{T}^{k}} \mathbf{\Phi}_{\mathcal{T} \cup \mathcal{T}^{k}}}) \| \mathbf{x}_{\mathcal{T} \backslash \mathcal{T}^k} \|_{2} \nonumber\\
  & \overset{(f)}{\geq} & (1 - \delta_{K})\| \mathbf{x} _{\mathcal{T} \backslash \mathcal{T}^k} \|_{2},
      \label{eq:geaig1aa}
    \end{eqnarray}
    where  (a) is because $\mathbf{P}^\bot_{\mathcal{T}^k} \mathbf{\Phi}_{\mathcal{T}^k} = \mathbf{0}$, (b) is from the norm inequality, (c) and (d) use the fact that $ \mathbf{P}^\bot_{\mathcal{T}^k} =  (\mathbf{P}^\bot_{\mathcal{T}^k} )' =  (\mathbf{P}^\bot_{\mathcal{T}^k} )^2$, (e) is from Lemma~\ref{lem:rip6}, and (f) is from the RIP. (Note that $\mathcal{T}^{k} \subset \mathcal{T}$ and so $|\mathcal{T} \cup \mathcal{T}^{k}| = |\mathcal{T}| = K$.)
Also,
\begin{eqnarray}
 \lefteqn{\left\| \mathbf{\Phi}'_{\mathcal{T} \backslash \mathcal{T}^k} \mathbf{P}^\bot_{\mathcal{T}^{k}} \mathbf{v} \right\|_{2}} \nonumber\\
 & \overset{(a)}{=} & \left\| \left( \mathbf{P}^\bot_{\mathcal{T}^{k}} \mathbf{\Phi}_{\mathcal{T} \backslash \mathcal{T}^k} \right)' \mathbf{v} \right\|_{2} \nonumber\\
  & {\leq} & \sqrt{\lambda_{\max} \left( \left( \mathbf{P}^\bot_{\mathcal{T}^{k}} \mathbf{\Phi}_{\mathcal{T} \backslash \mathcal{T}^k} \right)' \mathbf{P}^\bot_{\mathcal{T}^{k}} \mathbf{\Phi}_{\mathcal{T} \backslash \mathcal{T}^k} \right)} \left\| \mathbf{v} \right\|_{2} \nonumber \\
 & \overset{(b)}{=} & \sqrt{\lambda_{\max} \left(  \mathbf{\Phi}'_{\mathcal{T}  \backslash \mathcal{T}^k } \mathbf{P}^\bot_{\mathcal{T}^{k}} \mathbf{\Phi}_{\mathcal{T}  \backslash \mathcal{T}^k } \right)} \left\| \mathbf{v} \right\|_{2} \nonumber \\
 & \overset{(c)}{\leq} & \sqrt{\lambda_{\max} \left( \mathbf{\Phi}'_{\mathcal{T} \cup \mathcal{T}^k  }  \mathbf{\Phi}_{\mathcal{T} \cup \mathcal{T}^k } \right)} \left\| \mathbf{v} \right\|_{2} \nonumber \\
 & {\leq} & \sqrt{1 + \delta_{K}} ~\left\| \mathbf{v} \right\|_{2}  \label{eq:969}
\end{eqnarray}
where (a) and (b) are due to $\mathbf{P}^\bot_{\mathcal{T}^k} =  (\mathbf{P}^\bot_{\mathcal{T}^k} )' =  (\mathbf{P}^\bot_{\mathcal{T}^k} )^2$ and (c) is from Lemma~\ref{lem:rip6}.

Using \eqref{eq:ggeeewwwq}, \eqref{eq:geaig1aa} and \eqref{eq:969}, we have
\begin{equation}
u \geq \frac{1}{\sqrt{K - k}} \left( (1 - \delta_{K}) \| \mathbf{x}_{\mathcal{T} \backslash \mathcal{T}^k} \|_{2} - \sqrt{1 + \delta_{K}} ~\| \mathbf{v} \|_{2} \right).
\end{equation} 
\vspace{2mm}

Next, we proceed to prove \eqref{eq:large5}.
Let $w \in \Omega$ denote the index corresponding to the largest element in $\left\{ |\langle \phi_{i}, \mathbf{r}^{k} \rangle| \right \}_{i \in \Omega \setminus (\mathcal{T} \cup \mathcal{T}^{k})}$. Then by the definition of $v$,
  \begin{eqnarray}
v & = & |\langle \phi_{w}, \mathbf{r}^{k} \rangle| \nonumber\\
  & = & |\langle \phi_{w}, \mathbf{P}^\bot_{\mathcal{T}^{k}} (\mathbf{\Phi x} + \mathbf{v}) \rangle| \nonumber \\
  & \overset{(a)}{\leq} & |\langle \phi_{w}, \mathbf{P}^\bot_{\mathcal{T}^{k}} \mathbf{\Phi x} \rangle| + |\langle \phi_{w}, \mathbf{P}^\bot_{\mathcal{T}^{k}}  \mathbf{v} \rangle| \nonumber \\
  & \overset{(b)}{=} & |\langle \phi_{w},  \mathbf{P}^\bot_{\mathcal{T}^{k}} \mathbf{\Phi}_{\mathcal{T} \setminus \mathcal{T}^{k}} \mathbf{x}_{\mathcal{T} \setminus \mathcal{T}^{k}} \rangle| + |\langle \phi_{w}, \mathbf{P}^\bot_{\mathcal{T}^{k}} \mathbf{v} \rangle|, 
  \label{eq:10009}
  \end{eqnarray}
where (a) is from the triangle inequality and (b) is because $\mathbf{P}^\bot_{\mathcal{T}^{k}} \mathbf{\Phi}_{\mathcal{T}^k} = \mathbf{0}$.
\vspace{2mm}

Since $w \notin \mathcal{T}$ and also noting that $\mathcal{T}^{k} \subset \mathcal{T}$ (by the induction hypothesis), we have $$|\{w\} \cup \mathcal{T}^{k} \cup (\mathcal{T} \setminus \mathcal{T}^{k})| = |\{w\} \cup \mathcal{T}| = K + 1.$$ Using this together with Lemma~\ref{lem:correlationrip}, we have
    \begin{eqnarray}
      \lefteqn{|\langle \phi_{w},  \mathbf{P}^\bot_{\mathcal{T}^{k}} \mathbf{\Phi}_{\mathcal{T} \setminus \mathcal{T}^{k}} \mathbf{x}_{\mathcal{T} \setminus \mathcal{T}^{k}} \rangle|} \nonumber\\
      & = & \left\| \phi'_{w} ( \mathbf{P}^\bot_{\mathcal{T}^{k}} \mathbf{\Phi}_{\mathcal{T} \setminus \mathcal{T}^{k}} ) \mathbf{x}_{\mathcal{T} \setminus \mathcal{T}^{k}} \right\|_{2} \nonumber\\
      &\leq& \delta_{|\{w\} \cup \mathcal{T}^{k} \cup (\mathcal{T} \setminus \mathcal{T}^{k}) |} \| \mathbf{x}_{\mathcal{T} \setminus \mathcal{T}^{k}} \|_{2} \nonumber \\
      &=& \delta_{K + 1} \| \mathbf{x}_{\mathcal{T} \backslash \mathcal{T}^k} \|_{2}  \label{eq:j1}
    \end{eqnarray}
and 
\begin{eqnarray}
|\langle \phi_{w}, \mathbf{P}^\bot_{\mathcal{T}^{k}} \mathbf{v} \rangle| &=&
|\langle (\mathbf{P}^\bot_{\mathcal{T}^{k}})' \phi_{w}, \mathbf{v} \rangle| \nonumber\\
&\overset{(a)}{=}&
|\langle \mathbf{P}^\bot_{\mathcal{T}^{k}} \phi_{w}, \mathbf{v} \rangle| \nonumber\\
&\overset{(b)}{\leq}& \| \mathbf{P}^\bot_{\mathcal{T}^{k}} \phi_{w} \|_2 ~\|\mathbf{v} \|_2 \nonumber \\
&{\leq}& \| \phi_{w} \|_2 ~\|\mathbf{v} \|_2 \nonumber \\
&\overset{(c)}{\leq}& \sqrt{1 + \delta_{1}} ~\|\mathbf{v} \|_2, \label{eq:ghg2553}
\end{eqnarray}
where (a) is due to the fact that $\mathbf{P}^\bot_{\mathcal{T}^k} =  (\mathbf{P}^\bot_{\mathcal{T}^k} )'$, (b) is from Cauchy-Schwarz inequality, and (c) is from the RIP.
\vspace{2mm}

Finally, combining \eqref{eq:10009}, (\ref{eq:j1}), and \eqref{eq:ghg2553} we have
\begin{equation}
    v \leq \delta_{K + 1} \| \mathbf{x}_{\mathcal{T} \backslash \mathcal{T}^k} \|_{2} + \sqrt{1 + \delta_{1}} ~\|\mathbf{v}\|_2,
 \label{eq:left5}
 \end{equation}
 which completes the proof.
\end{proof}

\section{Proof of \eqref{eq:fir}} \label{app:fir}

\begin{proof}
First, from the definition of OMP (see Table~\ref{tab:omp}), we have that for any integer $k \geq 0$,
\begin{eqnarray}
 \mathbf{r}^k - \mathbf{r}^{k + 1} &=& (\mathbf{y} - \mathbf{\Phi} \mathbf{x}^{k}) - (\mathbf{y} - \mathbf{\Phi} \mathbf{x}^{k + 1}) \nonumber \\
&\overset{(a)}{=}& \mathbf{P}_{\mathcal{T}^{k}} \mathbf{y} -
\mathbf{P}_{\mathcal{T}^{k + 1}} \mathbf{y} \nonumber \\
&\overset{(b)}{=}& (\mathbf{P}_{\mathcal{T}^{l + 1}} - \mathbf{P}_{\mathcal{T}^{l + 1}}
\mathbf{P}_{\mathcal{T}^{l}} )\mathbf{y} \nonumber \\ &=& \mathbf{P}_{\mathcal{T}^{l +
1}}  \mathbf{P}^\bot_{\mathcal{T}^{l}}
\mathbf{y} \nonumber \\ 
&=& \mathbf{P}_{\mathcal{T}^{l + 1}} \mathbf{r}^k,
\end{eqnarray}
where (a) is from that $$\mathbf{x}^{k} = \underset{\mathbf{u}:\textit{supp}(\mathbf{u}) = \mathcal{T}^k}{\arg \min} \|\mathbf{y}-\mathbf{\Phi} \mathbf{u}\|_2$$ and hence $\mathbf{\Phi} \mathbf{x}^{k} =  \mathbf{\Phi}_{\mathcal{T}^k} \mathbf{\Phi}^\dag_{\mathcal{T}^k} \mathbf{y} = \mathbf{P}_{\mathcal{T}^{l}} \mathbf{y}$, (b) is because $\textit{span}(\mathbf{\Phi}_{\mathcal{T}^{k}})
\subseteq \textit{span}(\mathbf{\Phi}_{\mathcal{T}^{k + 1}})$ so that
$\mathbf{P}_{\mathcal{T}^k} \mathbf{y} = \mathbf{P}_{\mathcal{T}^{k + 1}}
(\mathbf{P}_{\mathcal{T}^{k}} \mathbf{y})$.\\

Since $\textit{span}(\mathbf{\Phi}_{\mathcal{T}^{k + 1}}) \supseteq
\textit{span}({\phi}_{t^{k + 1}})$, we have
\begin{eqnarray} \label{eq:residual_A}
  \|\mathbf{r}^k - \mathbf{r}^{k + 1}\|_2^2 = \|\mathbf{P}_{\mathcal{T}^{k + 1}}
 \mathbf{r}^k\|_2^2 \geq \|\mathbf{P}_{t^{k + 1}}
 \mathbf{r}^k\|_2^2. \nonumber
\end{eqnarray}

Further, by noting that $\|\mathbf{r}^k - \mathbf{r}^{k + 1}\|_2^2 =
\|\mathbf{r}^k\|_2^2 - \|\mathbf{r}^{k + 1}\|_2^2$, we
have
\begin{eqnarray}
  \|\mathbf{r}^k \|_2^2 - \|\mathbf{r}^{k + 1}\|_2^2 &\geq& \|\mathbf{P}_{t^{k + 1}}
  \mathbf{r}^k\|_2^2 \nonumber \\
&\overset{(a)}{\geq}& \|({\phi}_{t^{k + 1}}^\dag)' {\phi}'_{t^{k + 1}} \mathbf{r}^k\|_2^2 \nonumber \\ 
&\overset{(b)}{=}& ({\phi}'_{t^{k + 1}} \mathbf{r}^k)^2 \|\mathbf{\phi}_{t^{k +
  1}}^\dag\|_2^2 \nonumber \\
&\overset{(c)}{=}& \|\mathbf{\Phi}' \mathbf{r}^k\|_\infty^2 \|\mathbf{\phi}_{t^{k + 1}}^\dag \|_2^2  \nonumber \\
&\overset{(d)}{\geq}& {\|\mathbf{\Phi}' \mathbf{r}^k\|_\infty^2} \left( \frac{\|\mathbf{\phi}_{t^{k + 1}}^\dag \mathbf{\phi}_{t^{k + 1}}\|_2^2}{\|\mathbf{\phi}_{t^{k + 1}}\|_2^2}\right)  \nonumber \\
&\overset{(e)}{\geq}&  \frac{\|\mathbf{\Phi}' \mathbf{r}^k\|_\infty^2}{1 + \delta_1}, \label{eq:mmmss4}
\end{eqnarray}
where (a) is because $\mathbf{P}_{t^{k + 1}} = \mathbf{P}'_{t^{k + 1}} = (\mathbf{\phi}_{t^{k + 1}}^\dag)' \mathbf{\phi}'_{t^{k + 1}}$, (b) is from that ${\phi}'_{t^{k + 1}}
\mathbf{r}^k$ is a scalar, (c) is due to the identification rule of OMP, (d) is from the norm inequality, and (e) follows from the RIP.

\end{proof}

\section{Proof of \eqref{eq:sec0}} \label{app:sec0}

\begin{proof}
The main goal of the proof is to establish a lower bound on $\|\mathbf{\Phi}' \mathbf{r}^k\|_\infty^2$.
\vspace{2mm}

Observing that $$(\mathbf{\Phi}' \mathbf{r}^k)_{\mathcal{T}^k} = \mathbf{\Phi}'_{\mathcal{T}^k} (\mathbf{P}^\bot_{\mathcal{T}^k} \mathbf{y})
    = (\mathbf{P}^\bot_{\mathcal{T}^k} \mathbf{\Phi}_{\mathcal{T}^k})' \mathbf{y} = \mathbf{0},$$
we have\begin{equation}
 \|\mathbf{\Phi}' \mathbf{r}^k\|_\infty^2 = \|(\mathbf{\Phi}' \mathbf{r}^k)_{\Omega \backslash \mathcal{T}^k}\|_\infty^2
\end{equation}
Then it follows from the H\"{o}lder's inequality that for all $\mathbf{w} \in \mathcal{R}^n$,
    \begin{eqnarray}
      \|\mathbf{\Phi}' \mathbf{r}^k\|_\infty^2
&{\geq}& \frac{\langle (\mathbf{\Phi}' \mathbf{r}^k)_{\Omega \backslash \mathcal{T}^k}, \mathbf{w}_{\Omega \backslash \mathcal{T}^k} \rangle^2}{\|\mathbf{w}_{\Omega \backslash \mathcal{T}^k}\|_1^2} \nonumber \\
&=& \frac{\langle\mathbf{\Phi}'\mathbf{r}^k, \mathbf{w}\rangle^2}{\|\mathbf{w}_{\Omega \backslash \mathcal{T}^k}\|_1^2} \nonumber \\
&=& \frac{\langle \mathbf{r}^k, \mathbf{\Phi} \mathbf{w}\rangle^2}{\|\mathbf{w}_{\Omega \backslash \mathcal{T}^k}\|_1^2} \label{eq:jiake}
    \end{eqnarray}
Let $\mathbf{w} \in
\mathcal{R}^n$ be a vector such that
\begin{equation} \label{eq:jjjjffff0000}
\mathbf{w}_\mathcal{S} =
\begin{cases}
\mathbf{x}_{\mathcal{S}} & \mathcal{S} \subseteq T \cap \mathcal{T}^{k} \cup
{\Gamma}^k_\tau,\\
\mathbf{0}         & \mathcal{S} \subseteq \Omega  \backslash (T\cap \mathcal{T}^{k} \cup
{\Gamma}^k_\tau).
\end{cases}
\end{equation}
where $\tau = \delta^{1/2}$.
See Figure \ref{fig:supp_w} for an illustration of $\textit{supp}(\mathbf{w})$.
\begin{figure}[t]
\begin{center}
\includegraphics[scale = 1]{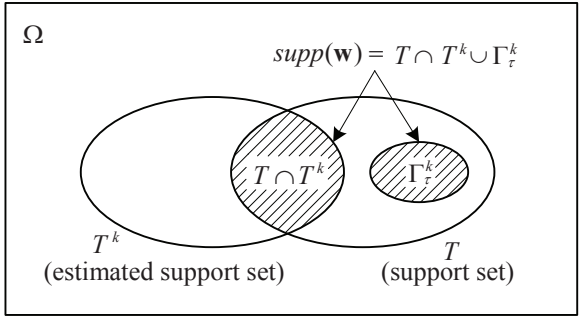}
\caption{Illustration of $\textit{supp}(\mathbf{w})$.} \label{fig:supp_w}
\end{center}
\end{figure}
Then \eqref{eq:jiake} can be rewritten as
    \begin{eqnarray}
      \|\mathbf{\Phi}' \mathbf{r}^k\|_\infty^2 &\geq& \frac{\langle\mathbf{\Phi}'\mathbf{r}^k, \mathbf{w}\rangle^2}{\|\mathbf{x}_{{\Gamma}^k_\tau}\|_1^2}  \nonumber \\
&\geq& \frac{\langle\mathbf{r}^k, \mathbf{\Phi} \mathbf{w}\rangle^2}{\|\mathbf{x}_{{\Gamma}^k_\tau}\|_1^2}  \nonumber \\
&\overset{(a)}{\geq}&  \frac{\langle \mathbf{r}^k, \mathbf{\Phi} \mathbf{w}\rangle^2}{{|{\Gamma}^k_\tau
|} \|\mathbf{x}_{{\Gamma}^k_\tau}\|_2^2}  \nonumber \\
&=&  \frac{\langle \mathbf{r}^k, \mathbf{\Phi} \mathbf{w}\rangle^2}{ \lceil \tau K \rceil\|\mathbf{x}_{{\Gamma}^k_\tau}\|_2^2} \nonumber \\
&\overset{(b)}{=}&  \frac{\langle \mathbf{r}^k, \mathbf{\Phi} (\mathbf{w} - \mathbf{x}^k) \rangle^2}{ \lceil \tau K \rceil\|\mathbf{x}_{{\Gamma}^k_\tau}\|_2^2} \label{eq:jiake1}
    \end{eqnarray}
where (a) follows from the norm inequality
($\|\mathbf{u}\|_1 \leq \sqrt{\|\mathbf{u}\|_0} ~
\|\mathbf{u}\|_2$) and (b) is because $$\langle \mathbf{r}^k, \mathbf{\Phi} \mathbf{x}^k \rangle = \langle \mathbf{P}^\bot_{\mathcal{T}^k} \mathbf{y}, \mathbf{P}_{\mathcal{T}^{k}} \mathbf{y} \rangle = \langle  \mathbf{y}, \mathbf{P}^\bot_{\mathcal{T}^k} \mathbf{P}_{\mathcal{T}^{k}} \mathbf{y} \rangle = \mathbf{0}.$$

We now consider the term $\langle
\mathbf{r}^k, \mathbf{\Phi} (\mathbf{w} - \mathbf{x}^k) \rangle^2$ in the numerator of the right-hand-side of \eqref{eq:jiake1}.
Since the residual can be expressed as $\mathbf{r}^k = \mathbf{\Phi} (\mathbf{x} - \mathbf{x}^k) + \mathbf{v}$, we have
\begin{eqnarray}
\lefteqn{\langle
\mathbf{r}^k, \mathbf{\Phi} (\mathbf{w} - \mathbf{x}^k) \rangle^2} \nonumber \\
 &=& \langle \mathbf{\Phi}(\mathbf{x} - \mathbf{x}^k) + \mathbf{v}, \mathbf{\Phi} (\mathbf{w} - \mathbf{x}^k) \rangle^2 \nonumber\\
 &=& ( \langle \mathbf{\Phi}(\mathbf{x} - \mathbf{x}^k), \mathbf{\Phi} (\mathbf{w} - \mathbf{x}^k) \rangle + \langle \mathbf{v}, \mathbf{\Phi} (\mathbf{w} - \mathbf{x}^k) \rangle )^2 \nonumber\\
 &\overset{(a)}{\geq}& (1 - \tau) \langle \mathbf{\Phi}(\mathbf{x} - \mathbf{x}^k), \mathbf{\Phi} (\mathbf{w} - \mathbf{x}^k) \rangle^2 -  \left( {1}/{\tau} - 1\right) \nonumber \\
  && \times \langle \mathbf{v}, \mathbf{\Phi} (\mathbf{w} - \mathbf{x}^k) \rangle^2 \nonumber\\
&\overset{(b)}{\geq}& (1 - \tau) \langle \mathbf{\Phi}(\mathbf{x} - \mathbf{x}^k), \mathbf{\Phi} (\mathbf{w} - \mathbf{x}^k) \rangle^2 -  \left( {1}/{\tau} - 1\right) \nonumber \\
  && \times \|\mathbf{v}\|_2^2 ~\|\mathbf{\Phi}(\mathbf{w} - \mathbf{x}^k)\|_2^2 \nonumber \\
&\overset{(c)}{=}& \frac{1 - \tau}{4} \left( \|\mathbf{\Phi}(\mathbf{x} - \mathbf{x}^k)\|_2^2 - \|\mathbf{\Phi} (\mathbf{w} - \mathbf{x})\|_2^2 \right. \nonumber \\
  && \left. + \|\mathbf{\Phi} (\mathbf{w} - \mathbf{x}^k)\|_2^2 \right)^2 - \left( {1}/{\tau} - 1\right) \|\mathbf{v}\|_2^2 ~\|\mathbf{\Phi}(\mathbf{w} - \mathbf{x}^k)\|_2^2, \nonumber \\ \label{eq:11me}
\end{eqnarray}
where (a) uses the inequality $$(u + v)^2 \geq (1 - \tau) u^2 - \left( {1}/{\tau} - 1\right) v^2$$
with $u = \langle \mathbf{\Phi}(\mathbf{x} - \mathbf{x}^k), \mathbf{\Phi} (\mathbf{w} - \mathbf{x}^k) \rangle$ and $v = \langle \mathbf{v}, \mathbf{\Phi} (\mathbf{w} - \mathbf{x}^k) \rangle$, (b) is due to the Cauchy-Schwarz inequality, and (c) is from the fact that $$\langle \mathbf{u}, \mathbf{z} \rangle = \frac{1}{2}\left(\|\mathbf{u}\|_2^2 - \|\mathbf{u} - \mathbf{z}\|_2^2 + \|\mathbf{z}\|_2^2\right)$$
with $\mathbf{u} = \mathbf{\Phi}(\mathbf{x} - \mathbf{x}^k)$ and $\mathbf{z} = \mathbf{\Phi} (\mathbf{w} - \mathbf{x}^k)$. 
\vspace{2mm}

Further, observe that
\begin{eqnarray}
\lefteqn{\|\mathbf{\Phi}(\mathbf{x} - \mathbf{x}^k)\|_2^2 - \|\mathbf{\Phi} (\mathbf{w} - \mathbf{x})\|_2^2} \nonumber \\
 & \overset{(a)}{=} & \|\mathbf{\Phi}(\mathbf{x} - \mathbf{x}^k)\|_2^2 - \|\mathbf{\Phi}_{\Gamma^k \backslash \Gamma_{\tau}^k} \mathbf{x}_{\Gamma^k \backslash \Gamma_{\tau}^k}\|_2^2
 \nonumber \\
 & \overset{(b)}{\geq} & (1 - \delta) \|\mathbf{x} - \mathbf{x}^k\|_2^2 - (1 + \delta) \| \mathbf{x}_{\Gamma^k \backslash \Gamma_{\tau}^k}\|_2^2  \nonumber \\
 & {\geq} & (1 - \delta) \|(\mathbf{x} - \mathbf{x}^k)_{\Gamma^k}\|_2^2 - (1 + \delta) \| \mathbf{x}_{\Gamma^k \backslash \Gamma_{\tau}^k}\|_2^2  \nonumber \\
 & \overset{(c)}{=} & (1 - \delta) \|\mathbf{x}_{\Gamma^k}\|_2^2 - (1 + \delta) \| \mathbf{x}_{\Gamma^k \backslash \Gamma_{\tau}^k}\|_2^2  \nonumber \\
 & {=} & (1 - \delta) \left( \|\mathbf{x}_{\Gamma^k_\tau}\|_2^2 + \|\mathbf{x}_{\Gamma^k \backslash \Gamma_{\tau}^k}\|_2^2\right) \nonumber \\
 && - (1 + \delta) \| \mathbf{x}_{\Gamma^k \backslash \Gamma_{\tau}^k}\|_2^2  \nonumber \\
 & = & (1 - \delta) \|\mathbf{x}_{\Gamma^k_\tau}\|_2^2 - 2 \delta \| \mathbf{x}_{\Gamma^k \backslash \Gamma_{\tau}^k}\|_2^2  \nonumber \\
 & \overset{(d)}{\geq} & (1 - \delta) \lceil \tau K \rceil (x_\tau^k)^2 - 2 \delta K ({x} ^k_{\tau})^2  \nonumber \\ 
 & \geq & \left(1 - \delta  - \frac{2 \delta}{\tau}  \right) \lceil \tau K \rceil ({x} ^k_{\tau})^2 \nonumber \\
 & \overset{(e)}{\geq} & (1 - 3\tau ) \lceil \tau K \rceil ({x} ^k_{\tau})^2, \label{eq:12me}
\end{eqnarray}
where (a) is from \eqref{eq:jjjjffff0000}, (b) is due to the RIP, (c) is because $\mathbf{x}^k_{\Gamma^k} = \mathbf{0}$, (d) is because $x_\tau^k$ is the $\lceil \tau K \rceil$-th largest element (in magnitude) in $\mathbf{x}_{\Gamma^k}$ and hence is the smallest one in $\mathbf{x}_{\Gamma_\tau^k}$, and (e) is from that $$1 - \delta  - \frac{2 \delta}{\tau} = 1 - \delta - 2 \tau \geq 1 - 3 \tau.$$

Using \eqref{eq:11me} and \eqref{eq:12me}, we have
\begin{eqnarray}
\lefteqn{\langle \mathbf{r}^k, \mathbf{\Phi} (\mathbf{w} - \mathbf{x}^k) \rangle^2} \nonumber \\
 & {\geq} & \frac{1 - \tau}{4} \left( (1 - 3\tau ) \lceil \tau K \rceil  ({x} ^k_{\tau})^2 + \|\mathbf{\Phi} (\mathbf{w} - \mathbf{x}^k)\|_2^2 \right)^2 \nonumber \\
 && - \left( {1}/{\tau} - 1\right) \|\mathbf{v}\|_2^2 \|\mathbf{\Phi}(\mathbf{w} - \mathbf{x}^k)\|_2^2 \nonumber \\
 & \overset{(a)}{\geq} & (1 - \tau) \left( (1 - 3\tau ) \lceil \tau K \rceil  ({x} ^k_{\tau})^2 \right) \|\mathbf{\Phi} (\mathbf{w} - \mathbf{x}^k)\|_2^2 \nonumber \\
 && - \left( {1}/{\tau} - 1\right) \|\mathbf{v}\|_2^2 \|\mathbf{\Phi}(\mathbf{w} - \mathbf{x}^k)\|_2^2 \nonumber \\
 & \overset{(b)}{\geq} & \left((1 - 4 \tau ) \lceil \tau K \rceil ({x} ^k_{\tau})^2 - \left( {1}/{\tau} - 1\right) \|\mathbf{v}\|_2^2 \right)  \nonumber \\
 && \times \|\mathbf{\Phi}(\mathbf{w} - \mathbf{x}^k)\|_2^2 \nonumber \\
 & \overset{(c)}{\geq} & (1 - 5\tau ) \lceil \tau K \rceil  ({x} ^k_{\tau})^2 \|\mathbf{\Phi}(\mathbf{w} - \mathbf{x}^k)\|_2^2 \nonumber \\
 & \overset{(d)}{\geq} & (1 - 5\tau ) \lceil \tau K \rceil  ({x} ^k_{\tau})^2 (1 - \delta) \|\mathbf{w} - \mathbf{x}^k\|_2^2 \nonumber \\
 & \overset{(e)}{\geq} & (1 - 6\tau ) \lceil \tau K \rceil  ({x} ^k_{\tau})^2 \|\mathbf{w} - \mathbf{x}^k\|_2^2 \nonumber \\
 & {\geq} & (1 - 6\tau ) \lceil \tau K \rceil  ({x} ^k_{\tau})^2 \|(\mathbf{w} - \mathbf{x}^k)_{\Omega \backslash \mathcal{T}^k}\|_2^2 \nonumber \\
 & = & (1 - 6\tau ) \lceil \tau K \rceil ({x} ^k_{\tau})^2 \|\mathbf{x}_{\Gamma^k_\tau}\|_2^2 \label{eq:13me}
\end{eqnarray}
where (a) follows from the fact that $(u + v)^2 \geq 4 u v$ with $u = (1 - 3\tau ) \lceil \tau K \rceil  ({x} ^k_{\tau})^2$ and $v = \|\mathbf{\Phi} (\mathbf{w} - \mathbf{x}^k)\|_2^2$, (b) is because $$(1 - \tau) ( 1 - 3 \tau) = 1 - 4 \tau + 3 \tau^2 \geq 1 - 4 \tau,$$  (c) is because 1) the assumption $\text{SNR} \geq \kappa^2 \delta^{-3/2}$ implies that \begin{eqnarray}
\|\mathbf{\Phi x}\|_2^2  & = & \text{SNR} \cdot {\|\mathbf{v}\|_2^2} \nonumber \\
&\geq &  \frac{\kappa^2}{\delta^{3/2}} \|\mathbf{v}\|_2^2 \nonumber \\
& = &  \frac{\kappa^2}{\tau^3} \|\mathbf{v}\|_2^2 \nonumber \\
& \geq & \frac{\kappa^2}{\tau^3} \left( 1 - \tau \left(\tau - \frac{1}{2}\right)^2 - \frac{3 \tau}{4} \right) \|\mathbf{v}\|_2^2 \nonumber \\
& = & \frac{\kappa^2 (1 + \tau^2)(1 - \tau) \|\mathbf{v}\|_2^2 }{\tau^3},
\end{eqnarray}
and 2) on the other hand,
\begin{eqnarray}
\|\mathbf{\Phi x}\|_2^2  & \overset{\text{RIP}}{\leq} & (1 + \delta) {\|\mathbf{x}\|_2^2} \nonumber \\
& {\leq} & (1 + \delta) K x^2_{\max} \nonumber \\
& {\leq} & (1 + \delta) K (\kappa x_{\min})^2 \nonumber \\
& {\leq} & (1 + \delta) K ( \kappa x^k_{\tau})^2 \nonumber \\
& = & (1 + \tau^2) K ( \kappa x^k_{\tau})^2,
\end{eqnarray}
and hence
\begin{equation}
\left( {1}/{\tau} - 1\right) \|\mathbf{v}\|_2^2 \leq \tau^2 K (x^k_{\tau})^2 \leq \tau \lceil \tau K \rceil (x^k_{\tau})^2, \label{eq:youcai}
\end{equation}
(d) is from the RIP, and (e) follows from that $$(1 - 5 \tau)(1 - \delta) \geq (1 - 5 \tau)(1 - \tau) = 1 - 6 \tau + 5 \tau^2 \geq 1 - 6 \tau.$$

Finally, using \eqref{eq:jiake1} and \eqref{eq:13me}, we have
\begin{eqnarray}
\|\mathbf{\Phi}' \mathbf{r}^k\|_\infty^2 &\geq& \frac{\langle \mathbf{r}^k, \mathbf{\Phi} (\mathbf{w} - \mathbf{x}^k) \rangle^2}{ \lceil \tau K \rceil\|\mathbf{x}_{{\Gamma}^k_\tau}\|_2^2} \nonumber \\
& \geq &
\frac{(1 - 6\tau ) \lceil \tau K \rceil ({x} ^k_{\tau})^2 \|\mathbf{x}_{\Gamma^k_\tau}\|_2^2}{ \lceil \tau K \rceil\|\mathbf{x}_{{\Gamma}^k_\tau}\|_2^2} \nonumber \\
& = & (1 - 6 \tau) ({x} ^k_{\tau})^2, \label{eq:mmmss4422}
\end{eqnarray}
which completes the proof.
\end{proof}

\bibliographystyle{IEEEbib}
\bibliography{CS_refs}
\end{document}